\newtheorem{theorem}{Theorem}
\newcommand{\emaila}{ h.shenavar@gmail.com, \\ h.shenavar@mail.um.ac.ir}
\begin{document}
\title{Motion of Particles in Solar and  Galactic Systems by Using  Neumann Boundary Condition }
\slugcomment{Not to appear in Nonlearned J., 45.}
\shorttitle{Orbits of Particles}
\shortauthors{Hossein Shenavar}

\author{Hossein Shenavar\altaffilmark{1}} 
\affil{ Department of Physics, Ferdowsi University of Mashhad,  Mashhad, Iran.\\ \emaila }

\begin{abstract}
A new equation of motion, which is derived previously by imposing Neumann boundary condition on cosmological perturbation equations   \citep{ShenavarI}, is investigated. By studying the precession of perihelion, it is shown that the new  equation of motion suggests a  small, though detectable, correction in orbits of solar system objects. Then a  system of particles is surveyed to have a better understanding of galactic structures. Also  the general form of the force law is introduced by which the rotation curve and mass discrepancy of axisymmetric disks of stars are derived.  In addition, it is suggested that the mass discrepancy as a function of centripetal acceleration becomes significant near a constant acceleration $ 2c_{1}a_{0} $ where $c_{1}$ is the Neumann constant and $ a_{0} = 6.59 \times 10^{-10} $ $m/s^{2}$ is a fundamental acceleration. Furthermore, it is shown that a critical surface density equal to $ \sigma_{0}=a_{0}/G $, in which G is the Newton gravitational constant, has a significant role in rotation curve and mass discrepancy plots. Also, the specific form of NFW mass density profile at small radii, $ \rho \propto 1/r $, is explained too.  Finally, the present model will be tested by  using a sample of 39 LSB galaxies for which we will show that the rotation curve fittings are generally acceptable. The derived mass to light ratios too are found within the  plausible bound except for the galaxy  F571-8.
\end{abstract}
 \keywords{ Neumann Boundary Condition; Constant Acceleration in EoM; Precession of perihelion; Pioneer Anomaly; Galactic Rotation Curve; Galactic Mass Discrepancy. }

\section{Introduction} 

The boundary condition of Einstein field equation has been debated since the begining of the general theory of relativity.  It has been shown that the Einstein-Hilbert action of general relativity is not well-posed in the sense that the boundary condition of the action is not compatible with the equations of field. See  \citet{Krishnan, Chakraborty} and the references therein. The root of the problem  is related to the fact  that Einstein field contains second-order derivatives of the metric \citep{Chakraborty}. However, one can overcome this obstacle by adding a boundary term to Einstein-Hilbert action to cancel the surface terms. In fact, as explained by \cite{Charap}, there could be infinitely many boundary terms to do this. The most famous boundary term is the YGH boundary term \citep{York, Gibbons} which makes action invariant under diffeomorphism \citep{Chakraborty}. In $ D $ dimension, we can write the YGH action which presumes Dirichlet boundary conditions, i.e. kills all the normal derivatives of the metric tensor on the surface,  as
\begin{equation}  \nonumber
S= \frac{1}{2\kappa}\int_{M} d^{D}x \sqrt{−g}(R - 2\Lambda)
+ \frac{1}{\kappa} \int_{\partial M} d^{D-1}y \sqrt{|h|} \epsilon K
\end{equation}
in which $ \kappa =8\pi G $, $R$ is the Ricci scalar, $\Lambda$ is the cosmological constant, $h$ is the induced metric on the boundary  $ \partial M $ with the coordinates $ y $, $ K$ is the extrinsic curvature of the boundary, $g$ is the metric and $\epsilon = +1$ for timelike and -1 for spacelike boundaries. See \citet{Krishnan, Chakraborty} for proof. Also it has been shown by \cite{Krishnan} that  the action under Neumann boundary condition could be written as:
\begin{equation}   \nonumber
S= \frac{1}{2\kappa}\int_{M} d^{D}x \sqrt{−g}(R - 2\Lambda)
+ \frac{D-4}{2 \kappa} \int_{\partial M} d^{D-1}y \sqrt{|h|} \epsilon K
\end{equation}
in which, surprisingly, the boundary term vanishes in four dimensions $ D=4 $. Based on this  action they have concluded that perhaps \textit{"standard Einstein-Hilbert gravity in four dimensions, without boundary terms, has an interpretation as a Neumann problem".} This unique feature of imposition of Neumann BC on GR field equations suggests that  this boundary condition might be more interesting than we have thought so far. 

In a recent paper  we imposed Neumann boundary condition to the equations of cosmological perturbation \citep{ShenavarI}.  As a result of  this new boundary condition, a modified Friedmann equation and a new lensing equation were found; where the latter was tested by a sample of ten strong lensing systems. In addition, we used the concept of geometrodynamic clocks  to modify the equation of motion of massive particles. We applied this equation to a sample containing  101 HSB and LSB galaxies and re-estimated the value of the Neumann constant which was found to be compatible with the prior evaluations from Friedmann and lensing equations. Moreover, by using a Newtonian approach we were able to derive the growth of structures in early universe and then we showed that the structures now grow  more rapidly in matter dominated era.

In this work, I apply the new equation of motion of massive particles to solar and galactic systems. Also I compare the results with CDM model, Milgrom's modified Newtonian dynamics and Mannheim's fourth order conformal gravity.  The CDM model tries to solve the apparent discrepancies between  the dynamics of galactic - and also extra galactic - systems  and their observed masses by assuming a dark halo around these systems  \citep{DMP,NFW1,NFW2}. Variety of obstacles like the so-called rotation curves of galaxies \citep{mondrc1,mondrc2},   the stability of galactic systems  through numerical simulations \citep{stable1,stable2} and structure formation  are solved by this assumption.   

 On the other hand, to solve the mentioned discrepancies, \citet{Milgrom1,Milgrom2,Milgrom3} considered the possibility that the inertia term of Newton's second law is not proportional to the acceleration of the object, but rather is a more general function of it as $m \mu (\frac{a}{a_{0}}) a = F$, in which $a $ is the acceleration, $a_{0}$ is a constant acceleration, $F$ is the force and  the function $ \mu $ - known as the interpolating function- plays a significant role at small accelerations (for example in galactic scales).  Modified Newtonian dynamics (MOND) can perfectly explain asymptotic flatness of rotation curves \citep{mondrc1,Sanders1,Sanders2,Sanders3}. Also it is shown that  the zero-point of the baryonic Tully-Fisher relation is of the order of  $ Ga_{0} $ which simply could  be derived from MOND equation of motion.  The same is true for the zero-point of the Faber-Jackson relation in isothermal systems like elliptical galaxies \citep{Milgrom4}. There is also another quantity, $ \sigma_{0}= a_{0}/G $, which plays a significant role in galactic scales. Disks with mean surface density less than $\sigma_{0} $ have  added stability \citep{Milgrom5}. In addition,  $ \sigma_{0} $ defines a transition central surface density and $ \sigma_{0}/2 \pi  $ is very close to the central surface density of dark halos \citep{Milgrom6}. Moreover, this model simply explains that features in the rotation curve should follow the baryonic distribution; so the Renzo's rule which states that "For any feature in the luminosity profile there is a corresponding feature in the rotation curve" \citep{Sancisi,McGaugh1}.  Also, through this model, it seems that mass discrepancy is related to acceleration in units of $ a_{0} $ \citep{McGaugh1,Lelli}. The main point is that, after so many data fittings of experts working on MOND model, the role of a constant  acceleration, $ a_{0} $, in galactic scales seems irrefutable now \citep{mond}. 

Another proposal to solve the dark matter problem is the  fourth-order conformal theory of gravity \citep{mannheim1,mannheim2,mannheim3,mannheim4}, which is related to the present model  because this theory too, predicts a constant acceleration.  The field equation of this theory is derived from the action $I_{w} = -\alpha \int d^{4} x \sqrt{-g} C_{\lambda \mu \nu \kappa} C^{\lambda \mu \nu \kappa}$ where $ C_{\lambda \mu \nu \kappa}  $ is the conformal Weyl tensor and $ \alpha $ is a purely dimensionless coefficient. This action leads to a fourth-order field equation which has an exact vacuum solution of the form  $ds^{2}= B(r)dt^{2} - \frac{dr^{2}}{B(r)} - r^{2} d\Omega $ in which $ B(r) = 1 - \frac{a}{r} + b + cr + \lambda r^{2}$. As it is clear, the fourth term on the rhs of $ B(r) $ establishes a constant acceleration which plays a critical role in this theory. It has been shown that by using this added linear potential, one could capture the general trend of the rotation curve data to a good degree without needing any dark matter. In addition, obtained mass to light ratios, $ M/L $, are close to the values of the local solar neighborhood \citep{mannheim6,mannheim7}. Conformal gravity claims that mass discrepancy in galactic scales is due to a global cosmological effect on local galactic motions. The cosmological constant problem too could be suppressed if we include the amount by which it gravitates \citep{mannheim5}.

In the present work,  I start by surveying the motion of a single particle subjected to the new equation of motion $a = g - 2c_{1}cH(t)$ derived in \cite{ShenavarI}. In this equation, $a $ is the acceleration, $ g$ is the gravitational field, $ H(t)=\dot{R(t)}/R(t) $ is the Hubble parameter, $c$ is the speed of light and $c_{1}$ is the Neumann constant. We also use $a_{0}= cH(t)$ to clarify the similarities with MOND dynamics, though, the values are not quite the same. I will also study the effects of the new term $2c_{1}cH(t)$ in the scales of the solar system  and I derive the precession of perihelion for some planets and dwarf planets. In the third section, the mentioned equation is applied to a system of particles and most importantly a modified force and potential is derived which is used in the next section to find the rotation curve of a disk galaxy. In the fifth section we study the functional dependence of mass discrepancy on radius, acceleration and gravitational field. Also we will argue that the present model can justify the radial dependence of NFW mass profile of dark halos and its constant surface density. Eventually, we use a data  sample, that includes 39 LSB galaxies, to  test the predictions of our model and then conclude our discussion with some final remarks.

\section{The Motion of a Single Particle }
We first consider the motion of a particle $ m $ in a gravitational field. As we discussed before, the modified equation of motion of this particle in a centrally directed gravitational field is as follows \citep{ShenavarI}:
\begin{equation}    \label{main}
\frac{d^{2}\vec{r}}{dt^{2}} + 2c_{1}a_{0}\hat{e_{r}} = g(r) \hat{e_{r}}
\end{equation} 
where $ \vec{r} = r \hat{e} $ is the position of the particle, $ a_{0}= cH_{0} $ is a constant acceleration, also observed in \citet{Milgrom1,Milgrom2,Milgrom3} modified dynamics with a different value, and $ g(r) $ is the Newtonian field of gravity. Here we can put the new term $ 2c_{1}a_{0}\hat{e_{r}} $ on the lhs of the equation of motion and  treat it as a modification to dynamics   or ,when we transfer  $ 2c_{1}a_{0}\hat{e_{r}} $ to the rhs of Eq. \eqref{main}, we can think of this term  as a new force. These two approaches will eventually lead to the same results. For example, it is possible to show that both of these two scenarios result in the following conservation of energy $E$:
\begin{equation}   \label{energy}
1/2m \vec{v}^{2} +2mc_{1}a_{0}r +V(\vec{r})= E
\end{equation}
where $ V(\vec{r}) = -\int^{\vec{r}} _{\vec{r_{0}}} m \vec{dx} .\vec{g}(\vec{x})$ is the gravitational potential energy. To prove this,  one may  consider $ \vec{F} =m(g(\vec{r})-2c_{1}a_{0})\hat{e_{r}} $ as a new force and use the definition of work done against this force in moving a particle from an initial place $ \vec{r_{0}} $ to a final place $ \vec{r} $:
\begin{equation}
W(\vec{r}) = \int^{\vec{r}} _{\vec{r_{0}}}  \vec{dx} .\vec{F}
\end{equation}
Then it is possible to derive the conservation of energy \eqref{energy} because the acting force is conservative. Another way to obtain the conservation of energy  is to directly integrate Eq. \eqref{main} to derive conservation of energy Eq. \eqref{energy}. As we mentioned, these two approaches are equivalent; however, we prefer to use the concepts of force and potential because the central force problem, which we need here, could be find in any standard textbook of theoretical mechanics and so   it is customary to follow this approach.

From Eq. \eqref{main} one can prove that $ \frac{d}{dt}(\vec{r} \times \frac{d\vec{r}}{dt}) = 0 $ which is expected because the  model is spherically symmetric. Therefore, similiar to any spherically symmetric model,  the particle here moves in a plane which is known as the orbital plane. Now, by using plane polar coordinate $ (r,\psi) $ and define the Lagrangian of the motion as follows:
\begin{equation}
\mathcal{L} = \frac{1}{2}m(\dot{r}^{2} +(r\dot{\psi})^{2}) - \Phi,
\end{equation}   
in which $ \Phi = \Phi_{N} +2mc_{1}a_{0}r $ is the total potential and $\Phi_{N}$ is the Newtonian potential energy, it is possible to  derive  the equations of motion from Euler-Lagrange equation \citep{Goldstein}:
\begin{equation} \label{Radial}
 m\ddot{r} -mr\dot{\psi}^{2} +\frac{d\Phi}{dr}=0   
\end{equation}
\begin{equation}  \label{Angular}
\frac{d}{dt}(mr^{2}\dot{\psi}) =0
\end{equation}
The second equation shows that the quantity $ l=mr^{2}\dot{\psi} $ is another constant of the motion which is known as the angular momentum.  Another proof of the conservation of angular momentum $\vec{l}= m \vec{r}\times \vec{v}$ could be obtained by considering its rate $d\vec{l}/dt$, and using these facts that the external torques are negligible and we deal with a central force here.

The radial equation of motion \eqref{Radial} will contain only $ r $ and its derivatives if we replace  $ \dot{\psi}  $  by $ l/mr^{2} $, . Therefore, this equation will be equivalent to the equation of  a one dimensional motion in which a particle is subjected to an effective potential of:
\begin{equation}
\Psi = \Phi_{N} + 2c_{1}m a_{0}r +\frac{l^{2}}{2mr^{2}}
\end{equation} 
where the third term on the rhs is due to the familiar centrifugal force. From Eq. \eqref{energy} one can rewrite the conservation of energy  as:
\begin{equation}    \label{Energy1}
E = \Psi +\frac{1}{2}m \dot{r} ^{2}
\end{equation}
Now we will study this one dimensional model for the specific case of an attractive inverse-square law of force, i.e. the Kepler problem $ \Phi_{N}=-\frac{k}{r} $ where a positive $ k $ describes a force toward the center. First consider the escape velocity of a particle which is defined as the minimum velocity required to escape from the gravitational field.  According to the discussion that we had before \citep{ShenavarI}, if the particle  reaches the radius $ R_{0} \approx \frac{2c_{1}a_{0}}{\Lambda} \approx 100 Mpc $, in which $ \Lambda $ is the cosmological constant, then it is free. The reason is that in this radius the repulsive force due to cosmological constant, $ F_{1} \approx m \Lambda r  $, overcomes the total attractive force $ F =m(g(r)-2c_{1}a_{0}) $.  Although for most physical cases,  one could neglect $g(r)$ at a distance like $ R_{0} $ because $ g(R_{0}) \ll 2c_{1} a_{0} $. Thus from Eq. \eqref{Energy1} the escape velocity is approximately $ v_{0}^{2} \approx 2c_{1}a_{0}R_{0} $  and every particle with an energy larger than $ v_{0}^{2} $ could come from infinity, strike the repulsive centrifugal barrier, be repelled and then travel back to infinity. These are very energetic particles. 
 On the other hand all particles with smaller velocities than $ v_{0} $ are bounded to the central object. For example consider a particle with energy $ E_{1} $ in Fig. \ref{fig:potential}. For this particle there are two turning points, $ r_{1} $ and $ r_{2} $, also known as "apsidal distances" \citep{Goldstein}. According to Bertrand's theorem, which states that the only central forces that result in closed orbits are the inverse-square law and Hooke's law \citep{Goldstein}, orbits of the present model are not closed because we have an added constant force too. However,  if energy of the particle coincides with the minimum of the effective potential then $ r_{1}=r_{2} $ and the orbit is a circle and thus closed. 

\begin{figure}[!hbp]
\centering
\includegraphics[height=4cm,width=8cm]{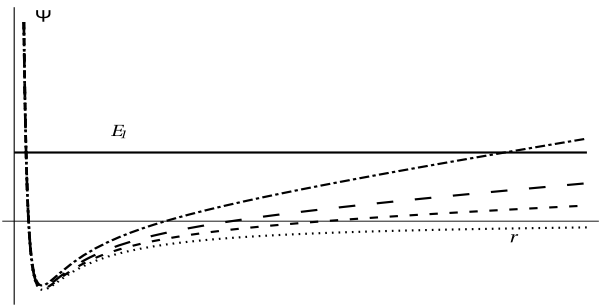}    \\
  \caption{General outline of the  effective potential $ \Psi $ as a function of radius $r$. A particle with total energy $E_{1}$ would have two turning points.}
  \label{fig:potential}
\end{figure}

Another important feature of Eq. \eqref{main} is that for this new equation of motion the famous Laplace-Runge-Lens  (LRL) vector is not a constant of the motion anymore. We will show this below, however we should mention that in the Kepler problem beside four independent constants of the motion, i.e. the three elements of angular momentum vector and the energy, LRL vector is another constant which always points in the same direction known as the line of apsides \citep{Goldstein}. In the present model though, the cross product of Eq. \eqref{main} with the constant angular momentum $ \vec{l}$  results in (after a little manipulation):
\begin{equation}
\frac{d}{dt}(\vec{p} \times \vec{l} -mk\frac{\vec{r}}{r}) = 2c_{1}m^{2}a_{0}(r \dot{\vec{r}}  -\dot{r}\vec{r})
\end{equation} 
where $ \vec{p} $ is momentum of the particle, $ k $ depends on the mass of the particle and the mass of the source of the gravity while $ \vec{A} = \vec{p} \times \vec{l} -mk\frac{\vec{r}}{r} $ is the LRL vector. Although $ \vec{A} $ is not a constant anymore, which means that the line of apsides changes direction, a natural question is whether there is any other constant of the motion. The answer, which is no,  is connected to the nonclosed nature of the orbits. See \cite{Goldstein} page $ 105 $.  As we discussed before, due to Bertrand's theorem, the orbitals of Eq. \eqref{main} are nonclosed; therefore, as $ \psi $ goes around the particle never retraces its footsteps on any previous orbit. Thus $ r $ is an infinite-valued function of $ \psi $ and so the additional conserved quantity of the motion should involve an infinite-valued function of $ \psi $ too. So there is no more simple-defined  constant of the motion.

Before we turn to the problem of a system of particles, which is needed to describe galactic phenomena, we  consider the perturbation effect of $ a_{0} $-term in Eq. \eqref{main} on the outer objects of the solar system.  To do so, we use $ \eta = 2c_{1}a_{0}/g_{N} $ as a measure of the strength of the new term in \eqref{main} relative to the Newtonian gravitational field $ g_{N}  $. Using this parameter,  near the surface of the Sun we have $ \eta \approx 10^{-14}  $, the Earth $ \eta \approx 10^{-9} $, Neptune $ \eta \approx 10^{-6}  $ and Eris (one of the last dwarf planets) $ \eta \approx 10^{-5}  $. Therefore the new term is very small compared to the Newtonian  gravity term in the solar system. 

\begin{table*}[t]
\caption{Derived precession of perihelion $\bar{\dot{\omega}}$ for two planets and two dwarf planets  in arc seconds per century   $~^{\prime\prime} / cy$. Data of semi major axis $A$ in $AU$, eccentricity $e$, orbital period $\tau$ (year) are derived from NASA http://nssdc.gsfc.nasa.gov/.} 
\centering  
\begin{tabular}{c c c c c c c c} 
\hline\hline                        
Object & A (AU) & e  & $\tau$ (y) & C & rev/cent & $\bar{\dot{\omega}}_{GR}$ ($~^{\prime\prime} / cy$) & $\bar{\dot{\omega}}_{a_{0}}$ ($~^{\prime\prime} / cy$)     \\ [0.5ex] 
\hline                  
\text{URANUS} & 19.2 & 0.046 & 83.7 & 6.323 & 1.195 & $2.4 \times 10^{-3}$ & 21.128 \\
 \text{NEPTUNE} & 30.05 & 0.011 & 163.7 & 6.286 & 0.611 & $7.8 \times 10^{-4}$ & 41.333 \\
 \text{PLUTO} & 39.48 & 0.245 & 247.9 & 7.558 & 0.403 & $4.2 \times 10^{-4}$  & 50.052 \\
 \text{ERIS} & 67.781 & 0.433 & 558 & 11.546 & 0.179 & $1.2 \times 10^{-4} $& 74.843 \\ [1ex]        
\hline 
\end{tabular}
\label{table:Perihelia} 
\end{table*}[t]

\begin{table*}[t]    
\caption{Derived precession of perihelion $\bar{\dot{\omega}}$  for four inner planets and the asteroid Icarus in arc seconds per century  $~^{\prime\prime} / cy$. Data of semi major axis $A$ in $AU$, eccentricity $e$, orbital period $\tau$ (year) are derived from NASA http://nssdc.gsfc.nasa.gov/. The observed values of perihelion precession are reported in \cite{Inverno} page 198 except for Mars which is reported in \cite{ohanian}. All results of $\bar{\dot{\omega}}_{a_{0}}$ are within the accepted bounds of the data. } 
\centering  
\begin{tabular}{c c c c c c c c c c} 
\hline\hline
Object &  A (AU) & e  & $\tau$ (y) & C & $A/A_{Neptune} $ & rev/cent & $\bar{\dot{\omega}}_{obs}$ ($~^{\prime\prime} / cy$) & $\bar{\dot{\omega}}_{GR}$ ($~^{\prime\prime} / cy$) & $\bar{\dot{\omega}}_{a_{0}}$ ($~^{\prime\prime} / cy$) \\ [0.5ex] 
\hline               
 \text{MERCURY} & 0.387 & 0.205 & 0.241 & 7.157 & 0.0129 & 414.9 & 43.1$\pm $0.5 & 42.96 & 0.06 \\
 \text{VENUS} & 0.723 & 0.007 & 0.615 & 6.284 & 0.0241 & 162.6 & 8.4$\pm $4.8 & 8.63 & 0.15 \\
 \text{EARTH} & 1. & 0.017 & 1        & 6.288 & 0.0333 & 100 & 5.0$\pm $1.2 & 3.84 & 0.25 \\
 \text{MARS} & 1.52 & 0.094 & 1.88 & 6.453 & 0.0506 & 53.2 & 1.32$\pm$? & 1.35 & 0.47 \\
 \text{ICARUS} & 1.078 & 0.837 & 1.12 & 149.913 & 0.0359 & 89.3 & 9.8$\pm $0.8 & 10.04 & 0.66 \\
\hline 
\end{tabular}
\label{table:Perihelia2} 
\end{table*}

Although the expected effects are very small, it is still possible to find some clues of the new term in the scale of solar system. For example, one can use canonical perturbation theory to evaluate precession rate averaged over a period of unperturbed orbit $ \tau $ as \citep{Goldstein}:
\begin{equation}     \label{Precession}
\bar{\dot{\omega}} =  \frac{\partial \overline{\Delta H}}{\partial l}
\end{equation}    
where $ \Delta H =2c_{1}ma_{0}r $ is the perturbation Hamiltonian and 
\begin{equation}
\overline{\Delta H} = \frac{2c_{1}ma_{0}}{\tau} \int^{\tau}_{0} r dt
\end{equation}
is the time average of the perturbation. Then by using the conservation of angular momentum, i.e. $ ldt=mr^{2}d \psi $, the last integral converts to:
\begin{equation}
\overline{\Delta H} = \frac{2c_{1}m^{2} a_{0}}{l\tau} (\frac{l^{2}}{mk})^{3} \int_{0}^{2\pi} \frac{d\psi}{[1+e \cos(\psi - \psi^{\prime})]^{3}}
\end{equation} 
in which $ e $ is the eccentricity and $ \psi^{\prime} $ is one of the turning angles  of the orbit.  It is also  possible to evaluate the precession rate from the average rate of LRL vector $(1/ | \vec{A} |) d\vec{A}/dt$; though we prefer to use the standard method of \citep{Goldstein}. Anyway, one can calculate the integral in the last equation for any given eccentricity $e$; the result, say $ C $, is most likely of the order of $ 10 $. See Table \ref{table:Perihelia} and \ref{table:Perihelia2} for the magnitude of $ C$ for some important objects of the solar system. Then, from Eq. \eqref{Precession} the averaged precession rate is as follows:
\begin{equation}     \label{Precession1}
\bar{\dot{\omega}}_{a_{0}} =  \frac{8c_{1}C}{\tau}\frac{ A^{2} a_{0}}{GM} (1-e^{2})^{2} 
\end{equation} 
where $ M $ is the mass of the Sun and we have used the relation $ l^{2}= mkA(1-e^{2}) $ which is correct for an unperturbed orbit with a semimajor axis $ A $ \citep{Goldstein}.  In fact, \cite{Sultana} have found the perihelion shift of a test particle in fourth-order conformal gravity by tracking timelike geodesics which is compatible
with the present result. Of course, one expects this agreement because both models are mainly based on linear potential terms.

We should point out that Eq. \eqref{Precession1} is independent of the mass of the planet and so it represents a shear geometrical effect. The same is true about the conventional precession equation of GR. However, , it should be mentioned that  we expect a geometrical precession rate in both models, because the underlying field equation, i.e. Einstein's field equations, are geometrical; though, here we have used a different boundary condition as discussed in the Introduction. In addition, from Eq. \eqref{Precession1} it is clear that orbits with larger semimajor axis will show larger precession rate. Therefore this effect should be more apparent in outer parts of the solar system. See Table \ref{table:Perihelia} for derived perihelion precession per century of two planets -Uranus and Neptune- and two dwarf planets - Pluto and Eris. In this table we have assumed that $ a_{0} = 6.59 \times 10^{-10} $ $m/s^{2}$.

As we know, general relativity predicts another correction to Newtonian motion that can be described by a potential proportional to $ 1/r^{3} $. This term comes from Schwarzschild solution of Einstein field equation in strong field limit around the Sun \citep{ohanian}. By using the same method as we applied here, one can evaluate precession rate averaged over a period for the GR case as $ \bar{\dot{\omega}}_{GR} = \frac{6\pi}{\tau (1-e^{2})}(\frac{GM}{c^{2}A}) $. See \cite{Goldstein} for full derivation. For the planet Mercury we have  $ \bar{\dot{\omega_{1}}} \approx 0.10^{\prime \prime} $ per revolution or $ 42.96^{\prime \prime} $ per century.  Table \ref{table:Perihelia2} demonstrates our model's prediction about inner planets. In fact, to have a reliable model of precission rate of inner plants, we should include the effects of all planets on inner ones. Therefore, full revisiting of the potential theory of the solar system is needed. This development is beyond the scope of the present work; thus, here we model the solar system by a uniformly distributed sphere of mass to roughly evaluate the effects of planets. As we will show in Appendix \ref{app} - after we surveyed systems of particles in the next section - the acceleration at radius $r$ of a uniformly distributed sphere by radius $R_{sphere}$ is proportional to $2c_{1}a_{0}\frac{r}{R_{sphere}}$. Approximating the radius of the solar system by the semimajor axis of Neptune $R_{sphere} = A_{Neptune} = 30 AU$, we estimate the  effect of the other  planets on the inner ones, by multiplying a factor of $ r/A_{Neptune} $ to Eq. \eqref{Precession1}. Therefore, for example in the Earth and Mercury we put $2c_{1}a_{0} \frac{1 AU}{30 AU} $ and $ 2c_{1}a_{0}\frac{0.387 AU}{30 AU} $ respectively, instead of $2c_{1} a_{0} $  to obtain the precession rate.

The  derived precession rate of inner planets are reported in Table \ref{table:Perihelia2}. In the case of Mercury, we find  $\bar{\dot{\omega}}_{a_{0}}=0.06^{\prime\prime} / cy$ (arc seconds per century) and  $\bar{\dot{\omega}}_{GR}=42.96^{\prime\prime} / cy$ while the observed value -subtracting the major effect of the Newtonian effects of the other planets - is equal to $43.1 \pm $ $0.5^{\prime\prime} / cy$ which shows that $\bar{\dot{\omega}}_{a_{0}} $ stays within the observational bound. The same is true about Venus and Earth and asteroid Icarus as you may see in Table \ref{Precession1}. In the case of Mars, the observational precission rate with its uncertainty could not be found, though, by investigating the uncertainties of the other planets shows that it should be around $\pm 0.5$ or more. Therefore, Mars too is within our model.   It is important to emphasize that although the effect of the new term $2c_{1}a_{0}$ is small compared to the GR effect, and very tiny compared to the Newtonian theory, it is not negligible at all. A thorough study  of $ \bar{\dot{\omega}}_{a_{0}} $ of inner planets and also asteroids  could help to evaluate its reliability, though, the best objects to search about such possible effects are the outer planets and the dwarf planets.

Because for orbits with  large eccentricities, the value of $ C $ grow rapidly,  we expect  larger precession rates. For example, for $ e=0.8 $ and  $ e=0.9 $ we have $ C=106.65 $ and $ C=561.01 $ respectively. The asteroid Icarus for example, which we talked about in the last paragraph, has an eccentricity of $e=0.837$ and so $C= 149.913$. High eccentricities are also common  for comets. This could be a helpful clue in our future investigations. Consider, for now, the Halley's comet with eccentricity of $ e=0.967 $ and semimajor axis of  $17.94 AU $.  This comet makes a perfect case for the reason that it has a very high eccentricity and one can derive $ C=8589 $ and $\bar{\dot{\omega}}_{a_{0}}=  108.96^{\prime\prime} / cy $. Although it should be mentioned that the factor $ (1-e^{2})^{2}$ is significant here too because it decreases the precession for large eccentricities. On the other hand, consider the comet  Chiron with $ A=13.7 AU$ and $ e= 0.383$. Data are again from NASA. For this comet one can find    $ C=10.02 $ and $\bar{\dot{\omega}}_{a_{0}}=   14.69^{\prime\prime} / cy $. Thus the precession rate  in the orbit of Halley's comet should be larger by almost an order of magnitude than the precession rate  in the orbit of Chiron. The question is whether this new term could explain some peculiar features of  the  motion of comets in the solar system.  However, this issue  is beyond the scope of our work because in the case of comets we have some close encounters with Jupiter and Saturn. Also the possibility of nongravitational forces should be included too \citep{Newburn}.
The point here is that if Eq. \eqref{main} is supposed to be correct, one should be able to detect its effects at the scales of the solar system.

At the scales of the solar system, one can find  some other independent evidence too. For example, tracking data of Pioneer 10 and 11 spacecraft have shown a systematic unmodeled acceleration about $ (8.74 \pm 1.33)\times 10^{-10} m/s^{2} $ directed toward the Sun \citep{Anderson1,Anderson2,Anderson3}. If there is indeed such added acceleration as Eq. \eqref{main}  at solar scale, it must have some effects on small objects of the this system too. In fact, \cite{Wallin}  considered the idea of an added constant acceleration, selecting a well-observed sample of trans-Neptunian objects with orbits between 20 and 100 AU  from the Sun, and placed tight bound on the magnitude of the constant acceleration. According to this research, the deviation from inverse square law of gravity should be about $ 8.7 \times 10^{-11} ~ m/s^{2}$ to $ 1.6 \times10^{-10}~ m/s^{2} $. This estimation supports the value of the added acceleration of the present model, i.e. $2c_{1}a_{0} = 8.6\times 10^{-11}$ though it is more close to the lower bound.

Even in larger scales of the solar system  like the Oort cloud,  \cite{Iorio} has  shown that a constant acceleration toward the Sun could make bound trajectories and these trajectories  radically differ from those of the Newtonian theory. Therefore, we see that there are some clues that support  the existence of a constant acceleration  in the scale of solar system and therefore, it is now safe to test Eq. \eqref{main} in larger scales like galaxies and clusters of galaxies. To do so, we need to investigate the issue of a system of particles all interacting according to Eq. \eqref{main}.

Before we discuss systems of particles, we should point out alternative models too have some predictions about precession rate in solar system. For instance, \cite{Gron} assumed a constant density of dark matter in solar system in hydrostatic equilibrium. Then, based on the uncertanties in the perihelion precession of the asteroid Icarus, which is about $8 \%$, they have shown that the density of DM would be about $\rho_{0} \leqq  1.8 \times 10^{-16} g/cm^{3} $ which is about 7 orders of magnitude larger than average galactic mass density. Therefore, we should conclude that either the uncertanty in the data could not be atributed to DM or the current data of the perihelion precession do not put strict limits on the density of it.

Fourth-order conformal gravity, on the other hand, has succeeded in matching data and predictions in the case of perihelion precession. \cite{Sultana} have used data for perihelion shift observations and found constraints on the
value of the fundamental constant of fourth-order gravity which coincides with that obtained by using galactic rotational curves. 
 
\section{Systems of Particles}
In this section the mechanics of a system of particles will be studied. To do so, we need some new definitions. Suppose that $ \vec{F}^{e}_{i} $ is the external force on the $ i $th particle, with mass $ m_{i} $, and let  $ \vec{F}^{N}_{ij}=\frac{Gm_{1}m_{2}}{r^{2}_{ij}}\hat{e_{ij}} $ be the Newtonian force exerted on the $ i $th particle by the $ j $th particle ( with $r_{ij}$ being the distance between the two particles and unit vector $\hat{e_{ij}}$ pointing from i to j ) and finally $ \vec{F}^{a_{0}}_{ij}$ be the term due to acceleration transformation. Clearly one has $ F^{N}_{ii} = F^{a_{0}}_{ii} =0 $ and also  $ \vec{F}^{N}_{ij} = - \vec{F}^{N}_{ji} $. On the other hand, because $ \vec{F}^{a_{0}}_{ij} $ is proportional to a constant acceleration we have:
\begin{equation} \label{forces}
\frac{F^{a_{0}}_{ij}}{F^{a_{0}}_{ji}} = \frac{m_{i}}{m_{j}}
\end{equation}
Using these definitions, the equation of motion of the $ i $th particle under the influence of external and internal forces is as follows:
\begin{equation}
m \ddot{\vec{r}}_{i} =  \vec{F}^{e}_{i}  + \sum_{j}( \vec{F}^{N}_{ij} + \vec{F}^{a_{0}}_{ij})
\end{equation}
while, if the last equation is summed for all other particles, it takes the form of:
\begin{equation}    \label{System}
\sum_{i} m_{i} \ddot{\vec{r}}_{i} = \sum_{i} \vec{F}^{e}_{i}  + \sum_{ij}( \vec{F}^{N}_{ij} + \vec{F}^{a_{0}}_{ij})
\end{equation}

Now it is possible to introduce four useful quantities: the center of mass or weighted average of the radii vectors of particles $ \vec{R} = 1/M \sum m_{i}\vec{r}_{i} $, total linear momentum of the particles $ \vec{P} = \sum m_{i} \frac{d\vec{r_{i}}}{dt} $, summation of radii $ \vec{\varrho} = \sum \vec{r}_{i} $ and summation of velocities $ \vec{\Pi} = \sum \frac{d \vec{r}_{i}}{dt} $. We use these definitions  to prove the following important theorem:
\begin{theorem}  
a) If the total external force is zero and $ \vec{F}^{a_{0}}$ is negligible, then the total linear momentum $ \vec{P} $ is conserved,

 b) If the total external force is zero and the Newtonian gravitational force $ \vec{F}^{N}$ is negligible, then the summation of velocity $ \vec{\Pi} $ is conserved,
 
  c) If the total external force is equal to zero, but neither Newtonian gravitational force nor the force due to constant acceleration $ \vec{F}^{a_{0}}$ are negligible, then neither  $ \vec{P} $ nor $ \vec{\Pi} $ are conserved. 
\end{theorem}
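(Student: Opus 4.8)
The plan is to read off the two structural symmetries of the pairwise forces and then differentiate the two candidate invariants. The Newtonian pair obeys Newton's third law $\vec{F}^{N}_{ij}=-\vec{F}^{N}_{ji}$, so the \emph{forces} cancel in pairs. The $a_{0}$ pair, being $m_{i}$ times a constant-magnitude acceleration of particle $i$ directed along $\hat{e}_{ij}$ while the reciprocal acceleration of particle $j$ points the opposite way, satisfies the vector refinement of Eq.~\eqref{forces}, namely $\frac{1}{m_{i}}\vec{F}^{a_{0}}_{ij}=-\frac{1}{m_{j}}\vec{F}^{a_{0}}_{ji}$, so now the \emph{accelerations} cancel in pairs. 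These two facts are dual, and the three parts of the theorem are the three ways of combining them.

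For part (a), differentiate $\vec{P}=\sum_{i}m_{i}\dot{\vec{r}}_{i}$ and insert Eq.~\eqref{System} with the $\vec{F}^{a_{0}}$ terms dropped, giving $\dot{\vec{P}}=\sum_{i}\vec{F}^{e}_{i}+\sum_{i,j}\vec{F}^{N}_{ij}$. Reorganising the double sum over unordered pairs $\{i,j\}$, each pair contributes $\vec{F}^{N}_{ij}+\vec{F}^{N}_{ji}=0$; combined with the hypothesis $\sum_{i}\vec{F}^{e}_{i}=0$ this yields $\dot{\vec{P}}=0$. For part (b), divide the equation of motion of the $i$th particle by $m_{i}$ before summing, so that $\dot{\vec{\Pi}}=\sum_{i}\ddot{\vec{r}}_{i}=\sum_{i}\frac{1}{m_{i}}\vec{F}^{e}_{i}+\sum_{i,j}\frac{1}{m_{i}}(\vec{F}^{N}_{ij}+\vec{F}^{a_{0}}_{ij})$; drop $\vec{F}^{N}$ and again group into unordered pairs, each of which now contributes $\frac{1}{m_{i}}\vec{F}^{a_{0}}_{ij}+\frac{1}{m_{j}}\vec{F}^{a_{0}}_{ji}=0$ by the acceleration-antisymmetry above, leaving $\dot{\vec{\Pi}}=\sum_{i}\frac{1}{m_{i}}\vec{F}^{e}_{i}$. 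Here I would state explicitly that ``the total external force is zero'' has to be read, for this part, in its natural isolated-system sense $\vec{F}^{e}_{i}\equiv 0$ (or at least $\sum_{i}m_{i}^{-1}\vec{F}^{e}_{i}=0$), upon which $\dot{\vec{\Pi}}=0$.

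For part (c), keep every term. The computation of part (a) now leaves $\dot{\vec{P}}=\sum_{i,j}\vec{F}^{a_{0}}_{ij}=\sum_{i<j}(\vec{F}^{a_{0}}_{ij}+\vec{F}^{a_{0}}_{ji})=\sum_{i<j}\frac{m_{i}-m_{j}}{m_{i}}\,\vec{F}^{a_{0}}_{ij}$, using $\vec{F}^{a_{0}}_{ji}=-\frac{m_{j}}{m_{i}}\vec{F}^{a_{0}}_{ij}$; dually, the computation of part (b) leaves $\dot{\vec{\Pi}}=\sum_{i,j}\frac{1}{m_{i}}\vec{F}^{N}_{ij}=\sum_{i<j}\frac{m_{j}-m_{i}}{m_{i}m_{j}}\,\vec{F}^{N}_{ij}$, using $\vec{F}^{N}_{ji}=-\vec{F}^{N}_{ij}$. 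Already for two particles of unequal mass these reduce to $\frac{m_{1}-m_{2}}{m_{1}}\vec{F}^{a_{0}}_{12}\neq 0$ and $\frac{m_{2}-m_{1}}{m_{1}m_{2}}\vec{F}^{N}_{12}\neq 0$ respectively, so neither $\vec{P}$ nor $\vec{\Pi}$ is, in general, a constant of the motion.

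The calculation is almost entirely routine reindexing of double sums; the two places that need care are (i) extracting the vector identity $\frac{1}{m_{i}}\vec{F}^{a_{0}}_{ij}=-\frac{1}{m_{j}}\vec{F}^{a_{0}}_{ji}$ from the scalar ratio \eqref{forces} — one must use the directional fact that the transformed-acceleration contribution on each particle points toward the other, just as the Newtonian one does, not merely the magnitudes — and (ii) in part (c), phrasing ``not conserved'' precisely as ``the time derivative is not identically zero'' and exhibiting an explicit configuration (the unequal-mass two-body case) where it fails, rather than claiming the residual never vanishes.
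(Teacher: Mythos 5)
Your proposal is correct and follows essentially the same route as the paper: pairwise cancellation of the Newtonian \emph{forces} for part (a), pairwise cancellation of the $a_{0}$ \emph{accelerations} (i.e.\ the vector form $\vec{F}^{a_{0}}_{ij}/m_{i}=-\vec{F}^{a_{0}}_{ji}/m_{j}$ of Eq.~\eqref{forces}) after dividing each equation of motion by $m_{i}$ for part (b), and the non-vanishing residual for part (c). Your part (c) is in fact slightly more explicit than the paper's, since the residual sums $\sum_{i<j}\frac{m_{i}-m_{j}}{m_{i}}\vec{F}^{a_{0}}_{ij}$ and $\sum_{i<j}\frac{m_{j}-m_{i}}{m_{i}m_{j}}\vec{F}^{N}_{ij}$ both exhibit the failure for unequal masses and recover the paper's noted exception that equal masses restore conservation of both $\vec{P}$ and $\vec{\Pi}$.
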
   
 
\begin{proof}
Part a). Under the conditions of part (a) which refer to strong field limit, the rhs of \eqref{System} is equal to zero since the law of action and reaction states that each pair $ \vec{F}^{N}_{ij} + \vec{F}^{N}_{ji} $ is zero. Thus the total linear momentum is conserved. Part b).  In this case, which refers to weak field limit, it is easier to start from \eqref{main} and sum over all particles. Then it follows:
\begin{equation}    \label{System1}
\sum_{i} \ddot{\vec{r}}_{i} = \sum_{i} \frac{\vec{F}^{e}_{i}}{m_{i}}  + \sum_{ij}( \dfrac{\vec{F}^{N}_{ij}}{m_{i}} + 2c_{1}a_{0}\hat{e_{ij}})
\end{equation}
For  every two particles $i$ and $j$, the last terms on the rhs of this equation $ 2c_{1}a_{0}\hat{e_{ij}} $ are equal and opposite and  so cancel each other. Now, if the total external force $ \vec{F}^{e} $ is zero and the Newtonian gravitational force $ \vec{F}^{N}$ is negligible, then it is clear that the rhs of Eq. \eqref{System1} is equal to zero and so $ \vec{\Pi}=\sum \frac{d \vec{r}_{i}}{dt} $ is a constant of the motion. Part c). This case is clear from Eqs. \eqref{System} and \eqref{System1} because the rhs of none of these equations would vanish if neither Newtonian gravitational force nor the force due to constant acceleration $ \vec{F}^{a_{0}}$ are negligible. However there is an important exception here: if all particles in a system have the same mass then both  $ \vec{P} $ and $ \vec{\Pi} $ are conserved.  
\end{proof} 

As an example, let us consider the important case of a two-body  system. Suppose that the total external force is negligible, then by adding equations of motions of these two particles, it is possible to find equation of motion of the center-of-mass:
\begin{equation}
M\ddot{\vec{X}} = 2c_{1}a_{0}\hat{e_{12}} (m_{2}-m_{1})
\end{equation} 
where $ \vec{X}=1/M(m_{1}\vec{x}_{1}+m_{2}\vec{x}_{2}) $ is the center of mass position and $ M = m_{1}+m_{2}$ is the total mass of the system. Therefore the conservation of the total linear momentum of the system is violated unless the two particles have same masses $ m_{2} = m_{1} $ as we mentioned in the last paragragh. To derive the relative motion of $ m_{2} $ and $ m_{1} $ one should multiply the equation of motion of the former by $ m_{2}/M $ and the latter by $ m_{1}/M $ and then subtract to achieve:
\begin{equation}
\mu \ddot{\vec{x}} = \vec{F}^{N}_{12} + 4\mu c_{1} a_{0}\hat{e_{12}}
\end{equation}
where $ \vec{x} = \vec{x}_{1}-\vec{x}_{2} $ is the relative position of the two objects and $ \mu =\frac{m_{1}m_{2}}{M} $ is known as the reduced mass. As one expects, for the case that the two particles have the same mass $m$ we obtain $ m \ddot{\vec{x}} = \vec{F}^{N}_{12} + 2c_{1}m a_{0}\hat{e_{12}} $, which is exactly of the form of equation of motion of single particle under the influence of gravitational field $ \vec{F}^{N}_{12} $ and a constant acceleration.

To obtain the net force on a particle $ m_{i} $  at position $ \vec{x} $ from a system of particles, one should simply add the small contribution from each small particle. However, for a typical galaxy by about $ 10^{11} $ stars this method is not practicable. A good solution to this problem is to assume a smooth mass distribution  which is  proportional to the local star density $ \rho(\vec{x^{\prime}}) $ at any point. According to Eq. \eqref{main}, the tiny force exerted by a small element of mass ( of a large mass distribution  ) $ dm^{\prime}=\rho(\vec{x^{\prime}}) d^{3}\vec{x^{\prime}} $ positioned at $ \vec{x^{\prime}} $ on a point particle $ m_{i} $ placed at $\vec{x}$ is as follows:
\begin{equation}
\delta \vec{F}(\vec{x}) = Gm_{i}\frac{\vec{x^{\prime}} - \vec{x}}{| \vec{x^{\prime}} - \vec{x}|^{3}} \rho(\vec{x^{\prime}})d^{3}\vec{x^{\prime}} + 2c_{1}a_{0} m_{i}\frac{\vec{x^{\prime}} - \vec{x}}{| \vec{x^{\prime}} - \vec{x}|}
\end{equation}  
Then the total force of the mentioned distribution on the point particle $ m_{i} $ can be deduced by summing these  small contributions:
\begin{eqnarray}   \nonumber   \label{Force}
\vec{F}(\vec{x}) = m_{i} (  G \int \frac{\vec{x^{\prime}} - \vec{x}}{| \vec{x^{\prime}} - \vec{x}|^{3}}\rho(\vec{x^{\prime}})d^{3}\vec{x^{\prime}}  \\   
     + \frac{2c_{1}a_{0}}{M} \int \frac{\vec{x^{\prime}} - \vec{x}}{| \vec{x^{\prime}} - \vec{x}|} \rho(\vec{x^{\prime}})d^{3}\vec{x^{\prime}} ). 
\end{eqnarray}
Here $ M $ is the total mass of the distribution. To obtain the second term on the rhs of \eqref{Force} we used the fact that the acceleration due to the new term $ 2c_{1}a_{0} $ between $ m_{i} $ and the mass distribution  should be the same. On the other hand, as Eq. \eqref{forces} states, the ratio of the forces between these two parts is equal to the ratio of their masses and so we can obtain the last equation. 

An interesting example to apply Eq. \eqref{Force} to is the force between a spherical shell of matter,  by mass $ M $, and a particle of mass $ m $ inside or outside of the shell. For this problem , the first integral on the rhs of Eq. \eqref{Force} refers to the famous shell theorems: {\it A uniform shell exerts no gravitational force on a particle inside the shell; the shell attracts an external particle as if all the mass of the shell were concentrated at its center} . For the second integral on the rhs of Eq. \eqref{Force} it is straightforward to show that for an external particle the exerted force is $ ma_{0}(1-R^{2}/3r^{2}) $  while for an internal particle we find $ 2ma_{0}r/3R  $. Here $ R$ represents the radius of the shell and $ r $ is the  distance of the particle from shell' s center.  See appendix \ref{app} for the derivation. Deriving the force for other symmetrical mass distributions - i.e. homogeneous sphere, an exponential sphere and an exponential disk - shows that for an internal particle the leading term of the force is always proportional to $r/R$.  This is the reason that in the previous section we demanded to use $ 2c_{1}a_{0}\frac{r}{R} $ instead of $ 2c_{1}a_{0} $ in the case of inner planets like the earth.    Furtheremore, it is easy to check that the new force is always toward the center of the shell. In addition, from the symmetry, it is clear that there is no net force on a particle located at $ r=0 $.  We will use these results to have a better understanding of the solutions and their asymptotic behaviour.

As usual, we prefer to work with scalar quantities instead of vector ones. Having this in mind, we can define the  potential $\Phi(\vec{x})$ of a distribution of matter by $\Phi(\vec{x})= \int \vec{F}(\vec{x^{\prime}}).d\vec{x^{\prime}} $ and derive it as:
\begin{eqnarray}   \label{Potential}
\Phi=-G \int \frac{\rho(\vec{x^{\prime}})d^{3}\vec{x^{\prime}}}{| \vec{x^{\prime}} - \vec{x}|}  + \frac{2c_{1}a_{0}}{M}\int \rho(\vec{x^{\prime}})d^{3}\vec{x^{\prime}}|\vec{x^{\prime}} - \vec{x}|
\end{eqnarray} 
where the first term is the Newtonian gravitational potential and the second one is the new  potential due to the constant acceleration.  This new term provides extra attracting force as we need in galactic scales.

It is clear that the Newtonian gravitational potential  fulfills the Poisson equation;  though the net potential $ \Phi $ can not fulfill any second-order Poisson equation because of the presence of the new $a_{0}$ dependent potential $ \Phi^{a_{0}}=\frac{2c_{1}a_{0}}{M}\int \rho(\vec{x^{\prime}})d^{3}\vec{x^{\prime}}|\vec{x^{\prime}} - \vec{x}| $. It is possible, however, to show that the new potential $ \Phi $ satisfies a new fourth order Poisson equation:
\begin{equation}
\nabla^{4} \Phi = 4 \pi G \nabla^{2} \rho_{b} - \frac{16c_{1} \pi a_{0}}{M} \rho_{b}.
\end{equation}  
in which $ \rho_{b} $ is the baryonic matter. It should be pointed out that higher-order Poisson equations are very common in beyond Einstein gravity. For example, in fourth-order conformal gravity or $F(R)$ gravities, we essentially deal with an action with higher-than-second order derivatives. Therefore, higher order Poisson equations emerge quite naturally without even considering systems of particles. In these theories one should deal with the problem of ghosts, i.e. negative norm states which break the unitarity. However, in the case of fourth-order gravity, it has been shown that the ghost states disappear  from the eigenspectrum of the Hamiltonian   \citep{mannheim5}.

The situation in MOND is more complicated. In this model, a few modified Poisson equations have been proposed by using TeVeS or based on other related assumptions. See \cite{mond} for a complete review. This model is basically nonlinear, though some linearized versions of its Poisson equations has been reported \cite{mond}. The nonlinearity poses serious problems in using this great model.

The Poisson equation of the present model, i.e. the last equation,  is derived from a second-order action, i.e. Einstein-Hilbert action plus a boundary term. Thus it displays no ghost problem. Also it is genuinely linear and so more easy to use.  We discuss this new Poisson equation in more detail in a forthcoming paper in which we survey the distribution of matter in galaxies \citep{ShenavarIII}. Now we investigate the motion of particles in disk galaxies which is the main aim of the present work.

\section{Rotation Curve of Spiral Galaxies}
According to Newton's equation of motion, it is expected that the rotation velocity of particles orbiting  around a massive object falls  as $ v^{2} \sim \frac{1}{r} $ in which $r$ is the radius of the particle. This behavior is known as the Keplerian fall-off of velocity and it is highly supported by the solar system data. However, at galactic scales such decline in the velocity has not been observed. In fact, because of the hydrogen gas which is distributed in galaxies to much further distances than their stars, we now have detailed knowledge about the rotation velocity of objects at galactic scales and it is now clear that there is a mass discrepancy at least for galaxies which have  extended rotation curves beyond their optical radius. See \cite{DMP} for an excellent history.

According to  \cite{Casertano2}, the rotation curve data of galaxies could be divided into three categories. First, the rotation curve of low luminosity dwarf galaxies, with maximum velocity smaller than $ 100 $km/s, are typically rising. Second, those of intermediate to high luminosity galaxies, with $ 100 < v_{max} < 180 $ km/s and $ v_{max}> 180 $ km/s respectively, are found to be flat. Third, the rotation curves of the very highest luminosity galaxies are generally declining from $ 15 \% $ for NGC2903  to $ 30 \% $ for NGC2683.  

In is known that galaxies have very diverse mass distributions; however, for the sake of simplicity we consider  the special case of disk galaxies here. The disk galaxies contain stars, gas and dust. There are also  spiral structures with various shapes and lengths.  We usually measure distribution of stars in a galactic disk by observing total stellar luminosity emitted by unit area of the disk, i.e the surface brightness. Observations suggest that the surface brightness of these galaxies is approximately an exponential function of radius: $I(r) = I \exp(-r/R_{d})$; where $ R $ is the radius and $ R_{d} $ is the disk scale length.  Scale lengths range from $ 1~  kpc $ to more than $ 10 ~  kpc $; see  \citep{Binney} page 26. While  galactic disks are thin because mass density falls off much faster perpendicular to the equatorial plane than in the radial direction,  it is believed that most spiral galaxies contain also a bulge, which is a centrally concentrated stellar system. Here we seek a thin disk rotation curve; the extension to a separable thick disk or  a disk with a spherical bulge is straightforward.

\begin{figure}[!hbp]
\centering
\includegraphics[height=4cm,width=8cm]{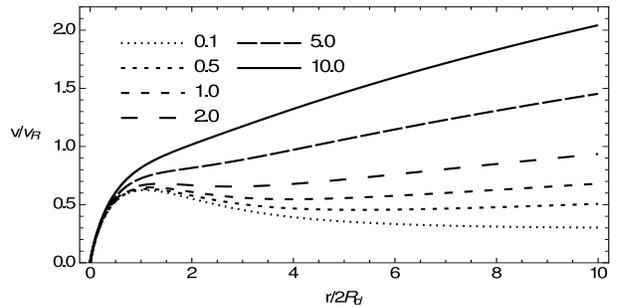}    \\
  \caption{Rotation velocity $v$ in units of $ v_{R}=\sqrt{GM/R_{d}} $ as a function of radius for different values of $ \sigma_{0} / \Sigma_{0} = 0.1$, $ 0.5$, $1.0$, $2.0$, $5.0$, $10.0 $. Low mass galaxies, i.e. galaxies with small $\Sigma$, show a clear rising rotation curve; those of intermediate- to high-mass galaxies have a relatively flat rotation curve, and those of the highest masses posses a Keplerian decline in their rotation curves to a large distance. }
  \label{fig:rotationcurve}
\end{figure}
 
If we suppose that, beside surface brightness of these galaxies, the disk surface mass density $\Sigma(R)$  is also exponential
\begin{equation} 
\Sigma(R) = \Sigma_{0} \exp(-r/R_{d}),
\end{equation}
in which $ \Sigma_{0} $ is the  surface mass density at $r=0$, then the total mass of the disk is $ M=2\pi \Sigma_{0} R^{2}_{d} $. Following an approach which \cite{Casertano1} has developed, by differentiating Eq. \eqref{Potential} with respect to $ r$ one could derive the circular velocity of the exponential disk:
\begin{eqnarray}  \nonumber   \label{Rcurve}
v^{2} (r) = r \frac{\partial \Phi}{\partial r} = ~~~~~~~~~~~~~~~~~~~~~~~~~~~~~~~~~~~~~~~~~~ \\ \nonumber
   4\pi \Sigma_{0} GR_{d} y^{2} [I_{0}(y)K_{0}(y)- I_{1}(y)K_{1}(y)]  ~~~~~   \\
       + 8c_{1}a_{0}R_{d}y^{2}  I_{1}(y)K_{1}(y)  ~~~~~~~~~~~~~~~~~~~~~~~~~
\end{eqnarray} 
in which $ I $ and $ K $ are modified Bessel functions and $ y=r/2R_{d} $. Complete derivation of the last equation is presented in \ref{app}. See also  \citet{mannheim3,mannheim5} for the same derivation related to another linear potential. One could also find the potential of a separable thick disk and  a spherical bulge in the latter reference. If the structure is composed of two or more elements, for example a cold exponential disk plus some HI gas distribution with a specific profile, then one has to evaluate rotation curve for each element and  add them up to derive the total rotation curve. For example, \cite{mannheim7} have approximated the gas profile  as single exponential disk with scale length equal to four times of those of the corresponding optical disk. This is a very successful approximation as fitted rotation curves clearly show \citep{mannheim7}.

\begin{figure}[!hbp]
\centering
\includegraphics[height=4cm,width=8cm]{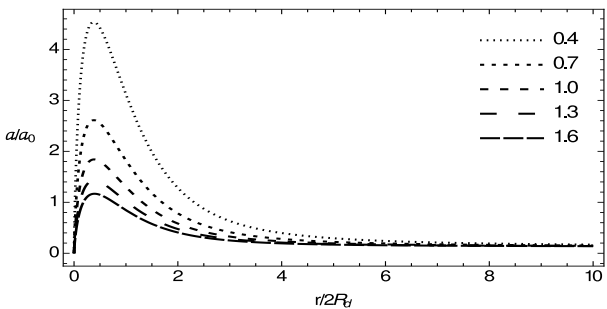}    \\
  \caption{Centripetal acceleration $ a $ in units of $ a_{0} $ as a function of radius for different values of $ \Sigma_{0}/\sigma_{0} = 0.4$, $ 0.7$, $1.0$, $1.3$, $1.6 $. Regardless of the mass, size and shape of galaxies the present model predicts that the centripetal acceleration should approach  a constant value of $ 2c_{1}a_{0} $ while Newtonian theory of gravity predicts a decreasing centripetal acceleration as $ 1/r^{2} $. See  \cite{ShenavarI} for a sample of galaxies which clearly suggests a constant acceleration at their last data points.}
  \label{fig:acceleration5}
\end{figure}

All departure  from the standard Newtonian mechanics are embodied in the $ a_{0} $-dependent term of the Eq. \eqref{Rcurve}. This term is competitive with the Newtonian one in outer parts of galaxies and clearly it predicts  that the more luminous the galaxy, the less important $ a_{0} $ term is; or the less hypothetical dark matter is needed. To see this, let  us rewrite the last equation as a dimensionless form by  dividing it by the maximum velocity of an equivalent spherical distribution $ v^{2}_{R}= GM/R_{d} $:
\begin{eqnarray}   \label{Rcurve1}
\frac{v^{2} (r)}{GM/R_{d}} =  2y^{2} [I_{0}(y)K_{0}(y)- I_{1}(y)K_{1}(y)]  \\ \nonumber
       + \frac{4c_{1}}{\pi}\frac{\sigma_{0}}{\Sigma_{0}}y^{2}  I_{1}(y)K_{1}(y)  ~~~~~~~~~~~~~~~~~~
\end{eqnarray}
where $ \sigma_{0} = a_{0}/G $ is a critical mass density. See Fig. \ref{fig:rotationcurve} for a plot of this equation for various ratios of $\sigma_{0}/ \Sigma_{0}$. Regarding Eq. \eqref{Rcurve1} it is evident that in galaxies with large surface mass density, i.e. large $ \Sigma_{0} $, the second term is less important while it would be dominant for galaxies with small surface density. In general, we see that the parameter $\sigma_{0}/ \Sigma_{0}$  determines the general behavior of the rotation curve here.

Detailed rotation curve data for a sample of 39 galaxies will be presented in the last section; here, we seek theoretical clues of Eq. \eqref{main} for future investigations. Another important feature of this equation is that it suggests a constant acceleration, i.e. $ 2c_{1}a_{0}$, at large radii from the center of the disk.   See Fig. \ref{fig:acceleration5} for a plot of the acceleration $ a =v^{2}/r$ scaled by $a_{0}$ as a function of scaled radius $ r/2R_{d} $ for different values of $\sigma_{0}/ \Sigma_{0}$. Outside of a disk, the acceleration as a function of radius could be approximated as $ GM/r^{2} + 2c_{1}a_{0}-3c_{1}a_{0}R^{2}_{d}/r^{2} $ which  we have used asymptotic behavior of modified Bessel functions to derive the last result. See Appendix \ref{app} for the formula for asymptotic behavior of Bessel functions. Therefore, it is clear that, although more massive galaxies reach a higher peak in their acceleration plot, the data of all galaxies  should asymptotically converge to  $ 2c_{1}a_{0} $ eventually. In our last paper \citep{ShenavarI} we showed that for a sample of galaxies the accelerations of the last data points are close to this predicted value.  This behavior is also strongly supported by Mannheim and O'Brien investigations \citep{mannheim6,mannheim7}. They have studied two different galaxy samples consisting of high surface brightness (HSB), low surface brightness (LSB) and dwarf galaxies with different masses, scales and rotation velocities. In the first sample they have considered 111 HSB, LSB and dwarf spiral galaxies. The second sample  is consisted of 27 objects which most of them are dwarf galaxies. The data are such that the value of the quantity $ (v^{2}/r)_{last} \approx 3 \times 10^{-11} m/s^{2}$ as measured at the last data point is near universal.  This quantity is also very close to the constant acceleration of the conformal Weyl gravity  which is numerically extracted from the data fitting of rotation curves \citep{mannheim5,mannheim6,mannheim7}. In addition, MOND theory \citep{Milgrom1} and Moffat's metric skew-tensor theory \citep{Brownstein}, which have succeeded to explain galactic rotation curve, possess two similar  universal parameters. For Mond theory one has $ a_{0} \approx 10^{-10} m/s^{2} $ and for MSTG theory $ G_{0}M_{0}/r^{2}_{0}c^{2} \approx 7.67 \times 10^{-29} cm^{-1}$ \citep{mannheim5,Brownstein}. Even more support for a role of a universal acceleration could be found  as we will discuss in the next section \citep{McGaugh1,Lelli}. Therefore, it seems that the existence of a constant quantity should be regarded as an important empirical clue for galactic systems.

\section{Mass Discrepancy}

In any alternative theory of gravity it is important to find new quantities to illustrate differences between new theory, the Newtonian one and the CDM paradigm. One of these quantities is  the mass discrepancy which is  defined as the squared ratio of observed velocity $v$ to the velocity that is attributable to Newtinian gravity of visible baryonic matter $ v_{N}$, viz. $(\frac{v}{v_{N}})^2$ \citep{McGaugh1,mond}. It has been shown that this quantity is well correlated with acceleration, and increases systematically with decreasing acceleration below the  MOND parameter  $ 1.2 \times  10^{-10} m/s^{2}$ \citep{McGaugh1}.  In the present model it is easy to start from equation \eqref{Rcurve1} and derive the mass discrepancy  as a function of radius r, centripetal acceleration $ a= \frac{v^2}{r} $ and Newtonian acceleration $ g_{N}$ as follows:
\begin{equation}     \label{massdis1}
(\frac{v}{v_{N}})^2 = 1 + 2c_{1}\frac{\sigma_{0}}{\Sigma_{0}}\frac{I_{1}(y)K_{1}(y)}{I_{0}(y)K_{0}(y)- I_{1}(y)K_{1}(y)}
\end{equation}              

\begin{equation}    \label{massdis2}
(\frac{v}{v_{N}})^2 = \frac{a}{a-4c_{1}a_{0}yI_{1}(y)K_{1}(y)}
\end{equation}

\begin{equation} \label{massdis3}
(\frac{v}{v_{N}})^2 = 1 + \frac{4c_{1}a_{0}}{g_{N}}yI_{1}(y)K_{1}(y)
\end{equation} 
Let us investigate the implications of these three equations. Equation \eqref{massdis1} predicts that in galaxies which surface mass densities are much smaller than $ \sigma_{0} $, mass discrepancy should exist everywhere. On the other hand, galaxies with a high surface density should show a larger mass discrepancy in their outer parts but not in their inner parts.  Thus, mass discrepancy as a function of radius depends on the ratio $ \sigma_{0} / \Sigma_{0} $ for all galaxies. See Fig. \ref{fig:massdiscr} for a plot of mass discrepancy as a function of radius for different values of $ \sigma_{0} / \Sigma_{0} $ ratios. Also you may find  data plots of mass discrepancy for various galaxies in  \cite{McGaugh1}. We should also point out that  if there is a bulge at the center of the galaxy then one should expect a higher mass discrepancy at small radii; though, the mass discrepancy at large radii would not change significantly.
\begin{figure}[!hbp]
\centering
\includegraphics[height=4cm,width=8cm]{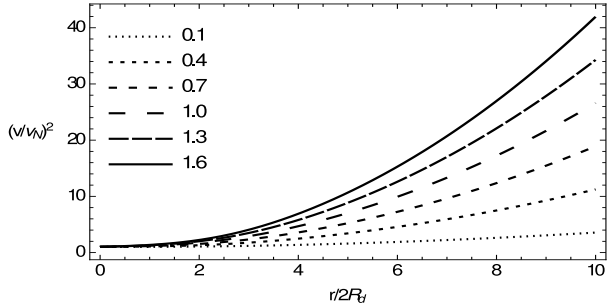}    \\
  \caption{Mass discrepancy as a function of radius for different values of $ \Sigma_{0}/\sigma_{0} =0.1$, $0.4$, $ 0.7$, $1.0$, $1.3$, $1.6 $. For larger values of $ \sigma_{0} / \Sigma_{0} $ the mass discrepancy is larger compared to the ones with smaller values of $ \sigma_{0} / \Sigma_{0} $. }
  \label{fig:massdiscr}
\end{figure}

\cite{McGaugh1} have shown that, regardless of the choice of stellar mass-to-light ratio, acceleration is the physical scale with which the mass discrepancy correlate best. See also \cite{mond}. On the one hand, equation \eqref{massdis2} illustrates that for any galaxy when centripetal acceleration approaches to $ 2c_{1}a_{0}$, we should see a large mass discrepancy. See figure \ref{fig:massdisca}. In addition, the special form of \eqref{massdis2} seems generic for spiral galaxies; i.e. if equation \eqref{main} is correct, all of spiral galaxies should have the same relation between mass discrepancy and centripetal acceleration. This is an important point because  the previous results, viz. Eqs. \eqref{Rcurve} and \eqref{massdis1}, seems to vary from galaxy to galaxy because they depend on the central surface density $ \Sigma_{0} $; however equation \eqref{massdis2} shows a unique character for all spiral galaxies. On the other hand, when centripetal acceleration is much larger than $ 2c_{1}a_{0} $, for example at scales of the  solar system, mass discrepancy is very small. In these limits the mass discrepancy approaches to one. See the horizontal asymptote of Fig. \ref{fig:massdisca}. In addition, for any radius there is a vertical asymptote at $ a \lesssim 2c_{1}a_{0} $. Furthermore, from equation \eqref{massdis3} we see a large mass discrepancy when $ g_{N} $ approaches to zero. However when $ g_{N} $ is much larger than $2c_{1}a_{0}$, mass discrepancy is very small. To conclude, we should say that if equation \eqref{main} is ought to be  the correct description of object's motion in galactic scales, equations \eqref{massdis1}, \eqref{massdis2} and \eqref{massdis3} should be applicable in any such scales, except around the centers of galaxies which we might need to use the strong field regime of the general theory of relativity. 

\begin{figure}[!hbp]
\centering
\includegraphics[height=4cm,width=8cm]{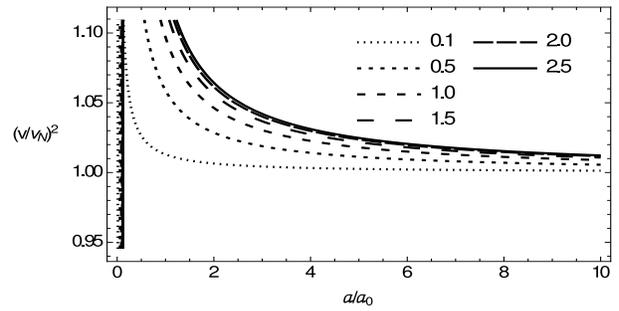}    \\
  \caption{Mass discrepancy $(\frac{v}{v_{b}})^2 $ as a function of centripetal acceleration $a$ for different values of $ y=r/R_{d} =0.1 $, $0.5  $, $1.0 $, $1.5 $, $2.0 $, $2.5$. The horizontal line, $(\frac{v}{v_{b}})^2 = 1 $ , indicates where there is no mass discrepancy; i.e. baryonic mass content is enough to explain the observed motion. It is expected, according to our model, that the mass discrepancy becomes important when $ a<4c_{1} a_{0} $.}
  \label{fig:massdisca}
\end{figure}

\textbf{ One last point:} Comparing the squared of the observed velocity $ v $ and Newtonian velocity $ v_{N} $, this time with their subtraction instead of their ratio,  could make a useful result. From Eq. \eqref{Rcurve} it is easy to see that:
\begin{equation}
\frac{v^{2}-v^{2}_{N}}{r}=4c_{1}a_{0}yI_{1}(y)K_{1}(y)
\end{equation}
where $ v $ and $ r $ are observable quantities and $ v_{N} $ could be derived \citep{McGaugh1}. Data plot of this last equation could directly disprove Eq. \eqref{main} if we find inconsistent results. On the other hand it is possible to define at any radius $ r $, a dynamical mass which is attributable to the rotation velocity: $ M_{D}=\frac{rv^{2}}{G} $. Then, starting from the last equation, one may obtain the difference between dynamical and Newtonian mass as follows: $M_{D} - M_{N} = 16c_{1} \sigma_{0} R_{d}^{2} y^{3}I_{1}(y)K_{1}(y) $; where $ M_{N} $ is the amount of Newtonian - or baryonic - mass inside radius $ r $. Now we interpret the quantity $M_{DM}= M_{D} - M_{N} $ as the amount of "missing mass" or the dark matter mass, where clearly is related to the critical surface density $ \sigma_{0} =a_{0}/G $ but not to the surface density $ \Sigma_{0} $ of  galaxies. Therefore the dark matter density distribution varies, very interestingly, as ($ \rho_{DM} =\frac{dM_{DM}}{dV} $):
\begin{equation}   \label{DMD}
\rho_{DM}=\frac{c_{1}\sigma_{0}}{2 \pi R_{d}} \lbrace I_{1}(y)K_{1}(y)+y(I_{0}(y)K_{1}(y)-I_{1}(y)K_{0}(y))\rbrace
\end{equation}
in which we have assumed a spherical shape for the dark halo, viz. $dV = 4 \pi r^{2}dr$. See Appendix \ref{app}. The first interesting  point is that according to this equation the density of the dark halo should vary proportional to the inverse of radius  at small radii, i.e.  $ \rho_{DM}\approx (c_{1}\sigma_{0}/\pi)[r^{-1} +O(r^{-1})] $. We used Eqs. \eqref{Knu} and \eqref{Inu} in Appendix \ref{app} to derive this asymptotic behavior. This behavior  is the same as the behavior of the NFW mass profile   for the galactic dark halos at small radii. Navarro, Frenk and White (NFW) used high-resolution N-body simulations to study density profiles of dark matter halos and found $ \rho^{NFW}_{DM} = \frac{\rho_{0}}{(r/r_{s})(1+(r/r_{s}))^{2}}$ \citep{NFW1,NFW2}. They realized that all such profiles have the same shape, specially independent of the halo mass. Their specific mass profile changes gradually from $ 1/r $ to $ 1/r^{3} $ beyond a certain critical distance from the center known as the scale radius $ r_{s} $. This special form of mass distribution reproduces galactic phenomena; however it provides an infinite mass for the dark matter halo \citep{Binney}. The second important point about Eq. \eqref{DMD} is that it predicts a constant central surface density for dark halos $\frac{c_{1}\sigma_{0}}{ \pi } $, i.e. the central surface density should be independent of the total mass of galaxies. Interestingly, this is a famous observational result that was first noticed by  \cite{Donato}. See also \cite{mond}. Then \cite{Milgrom6} proved that MOND too predicts such  central surface density for the dark halos as $ a_{0, MOND}/(2 \pi G) = 0.286 $ $kg/m^{2}$   which is close to our derived value of $\frac{c_{1}\sigma_{0}}{ \pi } = 0.204 $ $kg/m^{2}$.  Therefore, we see that the present model explains the two important features of the NFW mass profile namely, i.e. its small radii behavior and its constant central surface density.

The point that the halo mass profile   is dependent to the critical mass density $\sigma_{0}$, nor  the surface density of single galaxies $\Sigma_{0}$, is a  particularly interesting result. The reason is that we usually relate the observational quantities to the baryonic mass; e.g. rotation curves of galaxies and mass discrepancies as discussed here or Tully–Fisher relation which will be studied elsewhere \citep{ShenavarIII}. Although, the mass profile that we predicted above, i.e. $\rho_{DM}$, is solely related to the halo and is independent of the baryonic mass. The fact that the present model can  predict a property of a totally different paradigm, i.e. the DM hypothesis, without even referring to the observational data seems quite notable. 

\begin{figure}[!hbp]
\centering
\includegraphics[height=4cm,width=8cm]{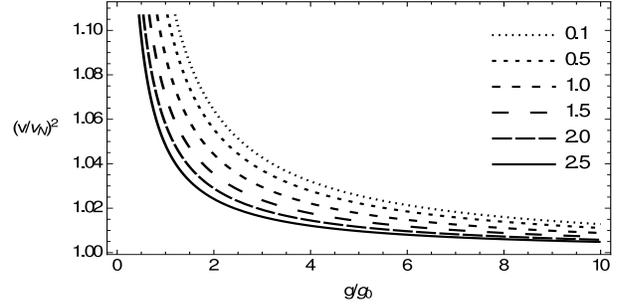}    \\
  \caption{Mass discrepancy $(\frac{v}{v_{b}})^2 $ as a function of Newtonian gravitational field $g$ for different values of $ y=r/R_{d} =0.1 $, $0.5  $, $1.0 $, $1.5 $, $2.0 $, $2.5$. }
  \label{fig:massdiscg}
\end{figure} 
 
 Eqs. \eqref{Rcurve} and \eqref{DMD} could explain another important feature of our model. From these two equations it is clear that all departure from pure Newtonian theory  are included in the $ a_{0} $-dependent terms. Therefore, because both of the functions are solely dependent to baryonic matter, any feature in rotation curve data and dark matter profile should occur simultaneously.  This is known as the Renzo's rule; i.e. "For any feature in the luminosity profile there is a corresponding feature in the rotation curve" \citep{Sancisi,McGaugh1}. In fact any pure baryonic theory could explain this observational rule.  It is important to note that all these results are achieved without assuming the dark matter hypothesis.  \\\\

\section{Data Analysis of 39 LSB galaxies}
In this section we apply the model that we discussed above to a sample of 39 LSB galaxies.   The names of the galaxies could be found in the first column of Table \ref{sample}. The distance D, luminosity $L_{B}$  and scale length $R_{d}$ of each galaxy with the sources of the data  are shown too. In this work, the presented distances of NED ( NASA/IPAC Extragalactic Database), reported by \cite{mannheim6}, are used which are based on Cepheids, Tully-Fisher relation or redshift measurement.  
 The estimated mass and mass to light ratio $M/L$, derived  from the curve fitting, are reported in the table too. In this data analysis, the galactic mass of  galaxies and the parameter $c_{1}$ are the free parameters of the fit. Following a discussion based on population synthesis model by \cite{Sanders1} and \cite{mannheim6}, we restricted the mass to light ratio to be larger than $0.2M_{\odot}/L_{\odot}$.  Also we found that the best fitting results are obtained when we restrict $c_{1}$ to be within the interval of $0.2 \times 0.065 \leq c_{1} \leq 2.0 \times 0.065$ which seems to be  reasonable due to our prior estimation of this parameter \citep{ShenavarI}. However, I should mention that the mean value that we found for $c_{1}$ in this data analysis appears to be about half of the predicted one $c_{1}=0.065$.  In \cite{ShenavarI} we used last data points of galactic rotation curves and found that the mean value of $c_{1}$ in the case of LSB galaxies is generally smaller compared to that of HSB galaxies. The difference compared to $c_{1}=0.065$ was found about 7 $\%$ lower for LSBs while while 37 $\%$ higher for HSBs. Comparing to these prior results, the mean value of $c_{1}$ that we found here, i.e. $c_{1} \approx 0.03$ needs more attention. The mean value is smaller, as we expected because we are dealing with LSBs; though it is half of the theoretical value. This might be due to our method of fitting or general uncertainties in the data. To find an accurate value for   $c_{1}$, we are urged to seek better fitting methods and more accurate data in future.

 \begin{table*}[t]    
  \centering
  \begin{tabular}{|c| c | c | c| c| c| c| c| c| c| c| c| c| c| c| }
    \hline  \hline
 \tiny Name & \tiny  D (Mpc) & \tiny $R_{d}$ (kpc) & \tiny Luminosity $L_{B}$ ($10^{10}L_{\odot})$ & \tiny $M_{B}$ ($10^{10}M_{\odot})$&\tiny Derived $M/L$ &  $c_{1}$ & RC Data Source \\    \hline
\text{DDO0064} & 6.8 & 1.3 & 0.015 & 0.003 & 0.2 & 0.043 & NMB  NMBB \\
 \text{ESO0140040} & 217.8 & 10.1 & 7.169 & 54.086 & 7.544 & 0.013 & MRB \\
 \text{ESO0840411} & 82.4 & 3.5 & 0.287 & 0.057 & 0.2 & 0.013  & MRB \\
 \text{ESO1200211} & 15.2 & 2. & 0.028 & 0.060 & 2.163 & 0.013 & MRB \\
 \text{ESO1870510} & 16.8 & 2.1 & 0.054 & 0.177 & 3.287 & 0.013 & MRB \\
 \text{ESO2060140} & 59.6 & 5.1 & 0.735 & 4.208 & 5.725 & 0.027 & MRB \\
 \text{ESO3020120} & 70.9 & 3.4 & 0.717 & 0.809 & 1.128 & 0.013 & MRB\\
 \text{ESO3050090} & 13.2 & 1.3 & 0.186 & 0.037 & 0.2 & 0.013 & MRB\\
 \text{ESO4250180} & 88.3 & 7.3 & 2.6 & 0.52 & 0.2 & 0.076 &MRB\\
 \text{E48800490} & 64.6 & 6.9 & 0.139 & 1.292 & 9.292 & 0.013 & MRB \\
 \text{F563-1} & 46.8 & 2.9 & 0.14 & 1.426 & 10.184 & 0.029 & MRB NMB  NMBB\\
 \text{F563-V2} & 57.8 & 2. & 0.266 & 0.639 & 2.403 & 0.013 & MRB NMB  NMBB\\
 \text{F568-3} & 80. & 4.2 & 0.351 & 0.616 & 1.756 & 0.026 & MRB NMB  NMBB\\
 \text{F571-8} & 50.3 & 5.4 & 0.191 & 5.897 & 30.872 & 0.013 & MRB \\
 \text{F579-V1} & 86.9 & 5.2 & 0.557 & 6.109 & 10.969 & 0.013 & MRB \\
 \text{F583-1} & 32.4 & 1.6 & 0.064 & 0.024 & 0.370 & 0.032 & MRB NMB NMBB\\
 \text{F583-4} & 50.8 & 2.8 & 0.096 & 0.319 & 3.325 & 0.013  & MRB NMB  NMBB\\
 \text{F730-V1} & 148.3 & 5.8 & 0.756 & 9.163 & 12.121 & 0.013 & MRB\\
 \text{NGC4395} & 4.1 & 2.7 & 0.374 & 1.824 & 4.877 & 0.013 &NMB  NMBB\\
 \text{NGC7137} & 25. & 1.7 & 0.959 & 0.349 & 0.364 & 0.013& NMB  NMBB\\
 \text{NGC959} & 13.5 & 1.3 & 0.333 & 0.352 & 1.056 & 0.040 & NMB  NMBB\\
 \text{UGC11454} & 93.9 & 3.4 & 0.456 & 3.205 & 7.029 & 0.017  & MRB\\
 \text{UGC11557} & 23.7 & 3. & 1.806 & 0.361 & 0.2 & 0.013  & MRB\\
 \text{UGC11583} & 7.1 & 0.7 & 0.012 & 0.002 & 0.2 & 0.016 & MRB\\
 \text{UGC11616} & 74.9 & 3.1 & 2.159 & 2.512 & 1.163 & 0.018 & MRB \\
 \text{UGC11648} & 49. & 4. & 4.073 & 4.387 & 1.077 & 0.013 & MRB \\         \text{UGC11748} & 75.3 & 2.6 & 23.93 & 9.042 & 0.378 & 0.043 & MRB\\
\text{UGC11819} & 61.5 & 4.7 & 2.155 & 4.638 & 2.152 & 0.040  & MRB\\
\text{UGC4115} & 5.5 & 0.3 & 0.004 & 0.0008 & 0.2 & 0.017 & MRB BMR \\
\text{UGC5750} & 56.1 & 3.3 & 0.472 & 0.094 & 0.2 & 0.013 & MRB BMR\\
\text{UGC6614} & 86.2 & 8.2 & 2.109 & 12.729 & 6.035 & 0.017 & MRB BMR\\
\text{UGC11820} & 17.1 & 3.6 & 0.169 & 2.108 & 12.474 & 0.13 & NMB  NMBB\\
\text{UGC1281} & 5.1 & 1.6 & 0.017 & 0.0469 & 2.760 & 0.014  & NMB  NMBB \\
\text{UGC128} & 64.6 & 6.9 & 0.597 & 5.703 & 9.553 & 0.014 & NMB  NMBB \\
\text{UGC1551} & 35.6 & 4.2 & 0.78 & 0.156 & 0.2 & 0.027 & BMR MRB \\
 \text{UGC191} & 15.9 & 1.7 & 0.129 & 0.170 & 1.31454 & 0.13 & NMB  NMBB \\
 \text{UGC4325} & 11.9 & 1.9 & 0.373 & 0.074 & 0.2 & 0.097 & NMB  NMBB \\
 \text{UGC477} & 35.8 & 3.5 & 0.871 & 1.289 & 1.480 & 0.013& NMB  NMBB \\
 \text{UGC5750} & 56.1 & 3.3 & 0.472 & 0.094 & 0.2 & 0.014 & NMB  NMBB \\  \hline
  \end{tabular}
  \caption{Properties of 39 LSB galaxies. Data of distance of galaxies D (Mpc), the scale length of galaxies $R_{d}$ (kpc) and their luminosity $L_{B}$ ($10^{10}L_{\odot})$ is derived from \cite{mannheim6}. The rotation curve data is extracted from \cite{Blok} shown as  BMR, \cite{McGaugh2} shown as MRB, \cite{Naray1} shown as NMB and shown as \cite{Naray2} NMBB. We have derived the mass  $M_{B}$ ($10^{10}M_{\odot})$ and mass to light ratio $M/L$.  }
\label{sample}
\end{table*}

Another point is that  no bulge is assumed in any of the cases. This presumption certainly affects the viability of the fitting in the inner parts of galaxies, though, it is clear that it should not strongly change  the rotation velocity at outer sections of  galaxies. The reason is that whatever new components we assume for galaxies, their contributions to the acceleration will converge to $2c_{1}a_{0}$ at large radii and thus it is independent of the shape of the new components. However, to obtain a better fitting result in inner parts, we intend to include the bulge contribution in future works.    

Beside this, the vertical thickness of the disk is ignored too. Galactic disk scale lengths $R_{d}$ are usually much larger than the  galactic scale heights. It can be proved that  the corrections due to thickness are only important in the inner parts of  galaxies, and therefore have no significant effect on the outer parts which is dominant by the linear potential contribution \citep{mannheim5}.

\begin{figure*}
\centering
\includegraphics[height=22cm,width=17cm]{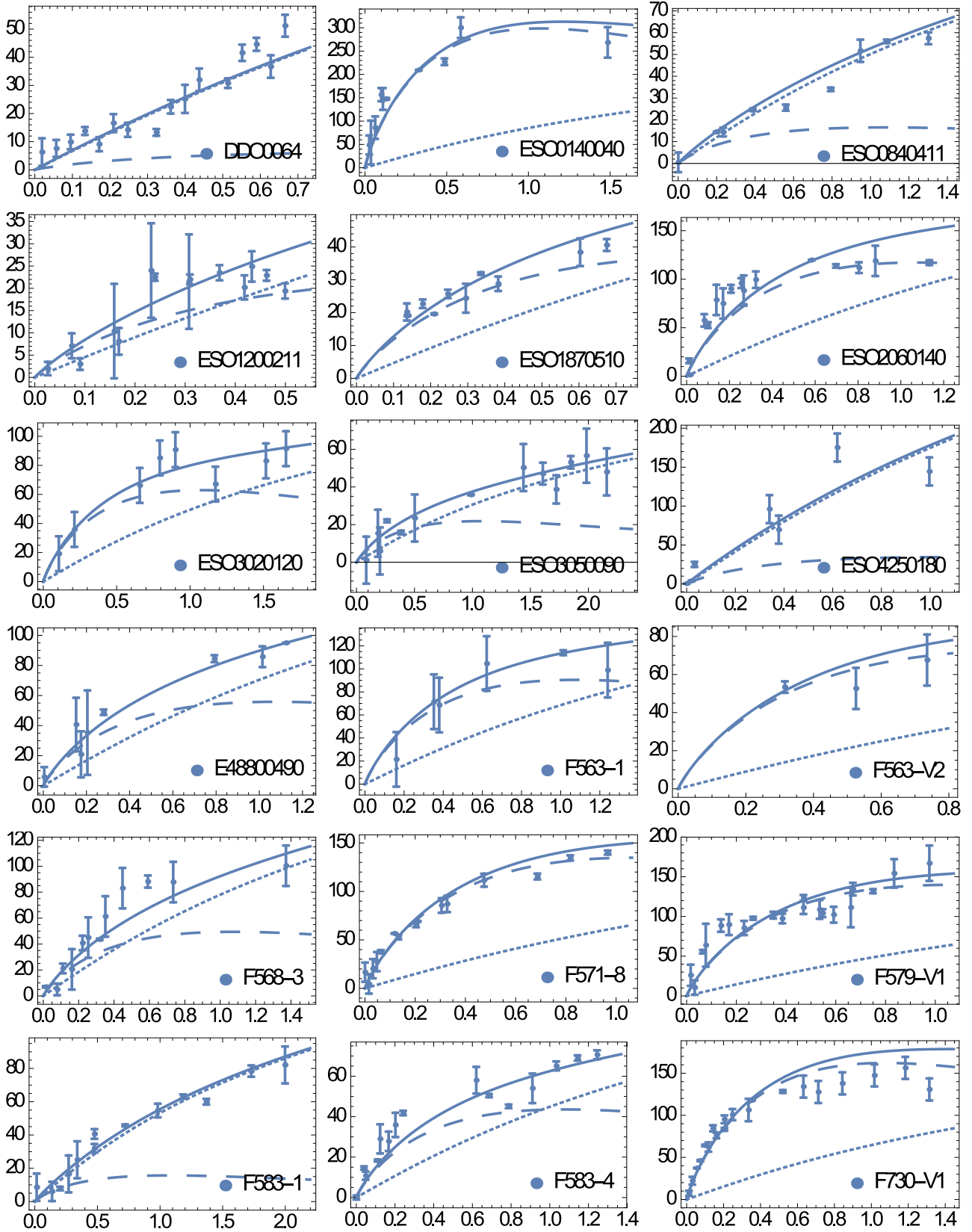}    \\
  \caption{ Fitting of the galactic rotation curves $v$ in $km/s$, with their error bars, as a function of the scaled radius $r/(2R_{d})$ for a sample of 18 galaxies. The contribution due to the luminous Newtonian term alone is shown with  large dashes while the contribution of the new term $2c_{1}a_{0}$ is shown with small dashes. Also, the solid line shows the result of the nonlinear fitting of Eq. \eqref{Rcurve} and the points with the error bars are from the observation. }
  \label{fig:rotfit1}
\end{figure*}

It is worthy to note that the derived mass in the present analysis encompasses all of the matter content of the galaxies including their gas and dust. Reading  the amount  of Hydrogen mass from literature, for example, one can multiply it by $4/3$, inferred from big bang nucleosynthesis  to include the amount of Helium, and derive the total mass of the gas. Usually a small fraction of galactic mass is due  to the mass of primordial Hydrogen and Helium. 

To evaluate the viability of any model, one prefers the rotation curve data to be  extended at least ten times as the scale of the galaxies $R_{d}$. Unfortunately this is not the case for most of the cases and, as we see in Fig \ref{fig:rotfit2}, there are some galaxies with data expanded just about $2R_{d}$ or $R_{d}$ or even less. According to the discussion that we had after Eq. \eqref{Rcurve1}, the rising, falling or constancy of the rotation curve will be clearly observed only after $r \approx 2R_{d}$. Therefore, it is not possible to decide about the accuracy of the model, with a high confident, unless we receive data at larger distances.

Another problem occurs when there are just a few number of data points and they are spread far from each other. For example check F563-V2 or ESO 4250180 in Fig. \ref{fig:rotfit1}. Although in these cases the fit can explain the general behavior of the galactic rotation curve, we should not expect it to follow rotation curve in any single detail. The problem here lies within the mathematical method of fitting which is based on numerical optimization of functions using differential evolution. The more that we have data points, the more accurate the machine can solve the problem  of minimizing and so we can find a better fitting.

\begin{figure*}[!hbp]
\centering
\includegraphics[height=23cm,width=17cm]{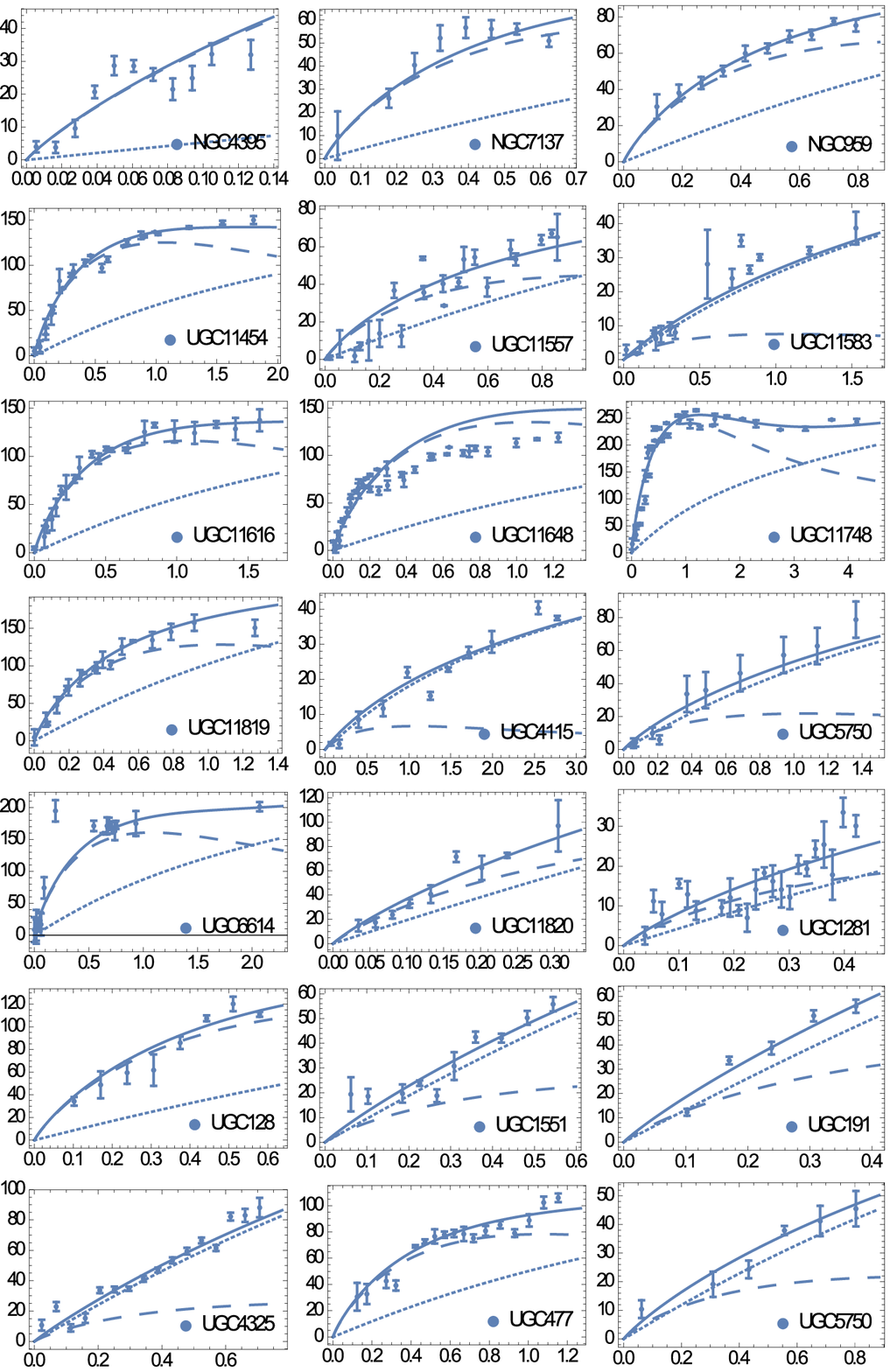}    \\
  \caption{Fitting of the galactic rotation curves $v$ in $km/s$, with their error bars, as a function of the scaled radius $r/(2R_{d})$ for a sample of 21 galaxies.}
  \label{fig:rotfit2}
\end{figure*}

Anyway, from Figs. \ref{fig:rotfit1} and \ref{fig:rotfit2} it is clear that, although we have not considered some of the details of galactic structures, like the presence of a bulge and the effect of disk thickness, our model can capture the general trend of the rotation curves of the present sample. In these figures, the contribution of the luminous Newtonian term alone is specified with  large dashes while the contribution of the new term $2c_{1}a_{0}$ is shown with small dashes and the solid line shows the result of the nonlinear fitting of Eq. \eqref{Rcurve}. The points with the error bars are the observational data.

However, we should point out that for three galaxies out of 39 - namely ESO 2060140, F730-V1 and UGC 11648 - the fitting curve lies above the data and so we observe some difficulties. For ESO 2060140 and UGC 11648, the curves show a magnitude roughly 25 $\%$ larger than the last data points while for the case of  F730-V1 the situation is less serious. In the cases of F730-V1 and UGC 11648 we observed that if we reduce the scale of the galaxies by about 35 $\%$  we can capture the data much better,  though this does not happen for ESO 2060140. See Fig. \ref{fig:rotfit3} for the rotation curves of these three galaxies assuming new galactic scale to be 0.65 times of that in Table \ref{sample}.   This might be a sign of a two-disk or bulge-disk scenario, because it shows that mass is concentrated in smaller radii, or even an overestimation of the galactic scale $R_{d}$.

\begin{figure*}[!hbp]
\centering
\includegraphics[height=3.5cm,width=17cm]{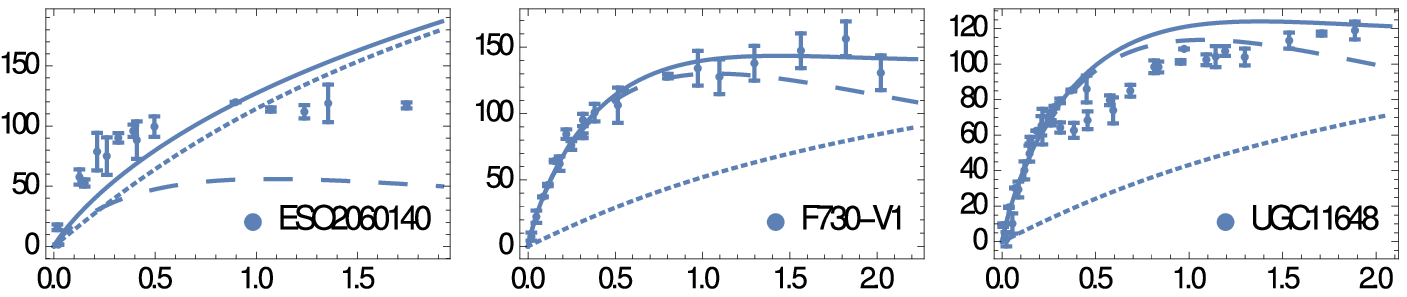}    \\
  \caption{  Fitting of the galactic rotation curves $v$ in $km/s$, with their error bars as a function of the scaled radius $r/(2R_{d})$ for three galaxies which show some difficulties. Here it is assumed that the  scale of the galaxies are 35 $\%$ smaller than the reported values in Table \ref{sample}.  This might be a sign of a two-disk or bulge-disk scenario. By this assumption  F730-V1 and UGC 11648 show a better fit while ESO2060140 is still problematic.}
  \label{fig:rotfit3}
\end{figure*}

We see clear flat rotation curves in the cases of UGC 11454,  UGC 11748 and   UGC 11616 while most of others seem to have rising rotation curves. Although as we argued before, for the galaxies with data spread bellow $r/(2R_{d}) \approx 1$ there isn't enough data to decide about the general trend of the rotation curve with confidence. In addition, we should mention that there is no falling rotation curve in this sample because there is no very massive HSB galaxies. Including this class of galaxies - HSB galaxies - in rotation curve analysis remains as one of our main priorities for future works.

Population synthesis model \citep{Sanders1} suggests an upper limit of mass to light ratio about $10 M_{\odot}/L_{odot}$. From mass to light ratio column in Table \ref{sample} we see that there is only one galaxy, i.e. F571-8, for which we have found a very high $M/L$ equal to about 31 though the fitting curve for this galaxy is quite good. For galaxies with small inclination angle one can argue that the total mass might be overestimated as \cite{mannheim6} have argued about UGC 5999 with an inclination about $14^{\circ}$.  However,  the present case is nearly edge-on and thus the above argument is not applicable. On the hand there are uncertainties in photometric data because of the optical depth and projection effects. Our derived mass is about $5.89 \times 10^{10}M_{\odot}$ which is due to the galaxy's very high reported velocities. Therefore we have a clear problematic case here. Considering the uncertainty in photometric data in addition to the success of this model for the other galaxies it is possible that the luminosity might have been underestimated in this case.

 It should be pointed out that to discuss the dynamics at cluster scales, or probably even huge galaxies with extended rotation curves, one needs to include the cosmological constant term $\Lambda$.  In such cases a new term of the form $1/3 \Lambda c^{2}r$, $r$ being the distance from the center of the galaxy, should be added to the equation of motion  which has the effect of reducing rotation velocity. If we use the current value of cosmological constant $ \Lambda \approx 10^{-52} m^{-2}$, it is possible to see that this term would be important beyond 200 kpc.  Fig \ref{fig:Lambda} explains this point for the case of  $ESO0140040$. As you see, the rotation curve with the contribution of $\Lambda$, shown by solid line, will fall ultimately; otherwise, it will increase indefinitely as shown by  small dashes. In the case of fourth-order conformal gravity too, there is a term similar to $\Lambda$ which causes the ultimate fall off of rotation velocity  \citep{mannheim6}. Though, this happens at smaller radii, compared to our model, because  \cite{mannheim6} have assumed that the term responsible for the fall-off is of the order of $ (1/100 Mpc)^{2} \approx 10^{-50} m^{-2} $. The role of cosmological constant should be studied more  thoroughly in future works. 
\begin{figure*}[!hbp]
\centering
\includegraphics[height=4cm,width=8cm]{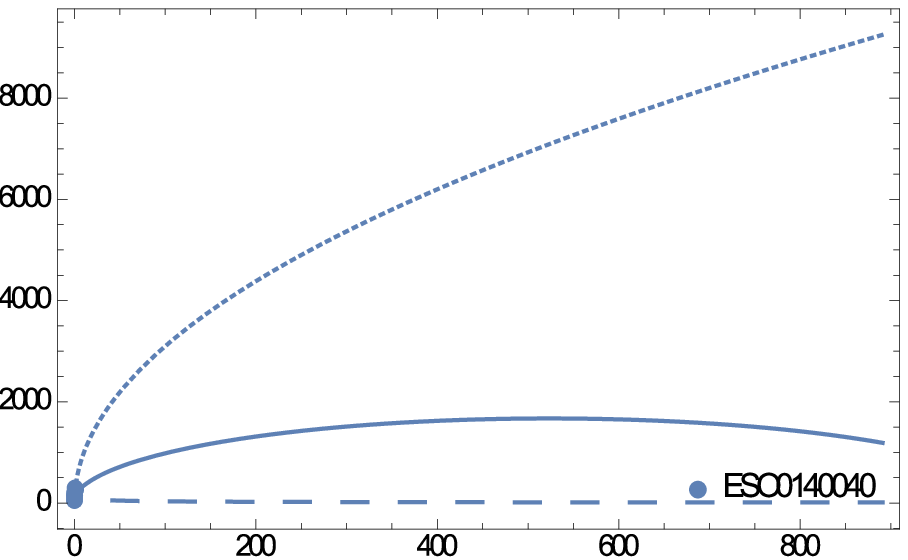}    \\
  \caption{ Rotation curve  $v$ in $km/s$ as a function of the scaled radius $r/(2R_{d})$ for galaxy $ESO0140040$. The contribution due to the luminous Newtonian term alone is shown with  large dashes while the contribution of the new term $2c_{1}a_{0}$ is shown with small dashes. The solid line shows the result of the nonlinear fitting of Eq. \eqref{Rcurve} plus the cosmological term $1/3 \Lambda c^{2}R^{2}_{d}y^{2}$.  }
  \label{fig:Lambda}
\end{figure*}

\section{Conclusion}

In this paper I started to investigate the consequences of a new equation of motion in weak field limit which is motivated by considering Neumann boundary condition and spacetime measurement in an expanding universe. The new term in Eq. \eqref{main} is an intermediate effect between mere curved spacetime - which we use to describe strong fields around massive objects like black holes - and a flat FRW spacetime. We surveyed the consequences of the new term  $ 2c_{1}a_{0}\hat{e_{r}}$ in solar and galactic scales and showed that it leads to some successful results. 

In the case of clusters, external probes  like  lensing can be very helpful in providing a better understanding of the general features of theses objects while interior probes, like X-ray kinematics and dynamics of single galaxies in a cluster, can equip us with  more detailed knowledge.  The same is true about satellite galaxies which  can tell us more about the viability of the present model and specially the role of the $\Lambda$ term.

Regarding the uncertainties in distances of galaxies which are not very high with our current methods - usually within a factor of two at most - and also the uncertainties in velocity data - which are at  most about 20 $\%$ to 30 $\%$ - we see that the universality of acceleration at the last data point should be  regarded as an important dynamical property of the galaxies. We discussed this matter in \cite{ShenavarI} and showed the near universality of the acceleration at the last data points for LSB and HSB galaxies. There we found that $c_{1LSB} = 0.060$ for LSB galaxies and $c_{1HSB} = 0.088$ for HSB galaxies. In the sample of HSB galaxies, there are a few galaxies with  very high luminosities that show a value of Neumann constant about $10 c_{1}$ or more. In these galaxies, the Newtonian term in the equation of motion is completely dominating the constant term $ 2c_{1}a_{0}\hat{e_{r}}$; thus we do not see a constant acceleration yet. However, excluding these data can help to find a better value of $c_{1}$  much closer to the value derived from Friedmann $c_{1}=0.065$.   

One needs to appreciate the fact that galaxies have a variety of shapes and internal structures which, to achieve  better rotation curve fittings, are needed to be included. Among these structures, the most important effect is due to  the presence of bulge. The thickness of galaxies, in some cases,  might be important too. We try to include these factors in future data analysis.

One might worry about the appearance of a new parameter $c_{1}$ in the present work, though, it should be noted that all models that try to solve the mass discrepancy problem, both DM models and modified gravities,  have their new parameters. This is of course natural because, after all, we see a discrepancy between our Newtonian predictions and the observations which should be addressed by one or more new parameters. Of course, we prefer theories with less parameters. 
For example, if we accept the dark matter hypothesis, we will need more parameters to go beyond the standard model of particle physics. For instance, supersymmetry could predict WIMPs, though this theory adds some new parameters to the standard model beside the main point  that the existence of  WIMPs is needed to be confirmed in detectors. In astrophysics of dark matter too, we need some new parameters to describe galactic/extragalactic effects of dark matter, two of which are mentioned in the NFW profile before. In this sense, the present model seems quite economical because it only possesses a single new parameter $c_{1}$. 

In MOND model too, we have a fundamental constant $a_{0}$ with  various proposed interpolating functions $\mu$. For now, it is inferred from MOND literature that the value $a_{0} \approx 1.2 \times 10^{-10} m s^{-2} $ is accepted among them \citep{mond}. This value is close to the value of our new term in the equation of motion $2c_{1}a_{0}= 8.6 \times 10^{-11}$.

From a theoretical point of view, there still remain many other works to do. For example,  according to Newtonian theory of gravity,  galaxies and clusters are unstable without a dark matter halo \citep{stable1,stable2}. This is in contrast with our repeated observations of  galaxies and clusters  which seem to be quite stable. However, numerical simulations have already proved that a linear potential is capable of providing stable disks without needing further matter or dark matter \citep{Christodoulou}, we will check the stability  - and other numerical features - of disk galaxies in a detailed numerical work in future. 

Another important conspiracy is the mass estimation of galactic systems. For example, it is expected that for a cold disk of stars the total mass and virial velocity be related through the following equation: $ v^{3}\sim M $. Instead, according to successive observations, these two parameters are related through the famous Tully-Fisher law: $ v^{4}\sim M $. See \citet{TF1,TF2,TF3} for more details. Any successful modified dynamics/gravity should provide solid proof for this empirical fact and its zero point.  Furthermore, the origin of exponential radial profiles in disk galaxies should be explained. These matters will be covered in another paper \citep{ShenavarIII}.

\appendix
\section{Force Due to Some Geometrically Symmetrical Configurations and Their Virial Energy} \label{app}
In this section we study the force due to our modified equation of motion \eqref{main} using Eq. \eqref{Force} for some useful configurations of matter. In addition, we derive the virial energy of these distributions from the next equation:
\begin{equation}    \label{virial}
Vir = -\int d^{3}\vec{x} \rho(\vec{x}) \vec{x}. \nabla \Phi
\end{equation}
for future reference.

$ \boldsymbol{ {\it Spherical ~Shell:}}$ Consider a point particle with mass $ m $ located at a distance $ r $  from  center of a shell with total mass of $ M $ and radius $ R $. In addition, suppose that $ \rho $ and $ t $  are the density and thickness of this shell. Now, by using the new term in the force \eqref{Force} , i.e. $ \vec{F}^{a_{0}} $, which we introduced before we have:
\begin{eqnarray}     
\vec{F}^{a_{0}} = m  \frac{2c_{1}a_{0}}{M} \int \frac{\vec{x^{\prime}} - \vec{x}}{| \vec{x^{\prime}} - \vec{x}|} \rho(\vec{x^{\prime}})d^{3}\vec{x^{\prime}}
\end{eqnarray}
it is possible to derive the inserted force on mass $ m $ outside of this shell as following:
\begin{equation}
F^{a_{0}}_{out} = \frac{2c_{1}ma_{0}}{M}\frac{\pi t \rho R }{r^{2}} \int^{r+R}_{r-R} (r^{2}-R^{2} +x^{2})dx
\end{equation}   
or $ F^{a_{0}}_{out}= 2c_{1}ma_{0}(1-\frac{R^{2}}{3r^{2}}) $. Similarly, for a point particle inside the shell one has $ F^{a_{0}}_{in} = \frac{4c_{1} m a_{0}r}{3R}$.  The derivation is the same, but the lower limit of the integral is now $ R-r $ instead of $ r-R $. Note that these two forms of force coincide at $ r = R $ and the net force vanishes at $ r=0 $ as it is expected.  Finally, it is possible to use \eqref{virial} and derive the virial energy due to  the $a_{0}$-dependent term  for this configuration as $ Vir_{a_{0}}=-2c_{1}MRa_{0} $.  

$ \boldsymbol{{\it Isotropic ~Sphere:}} $ Consider a sphere with an isotropic mass distribution $ \rho = \rho(x) $ and total mass $ M $ and radius $ R $.  Again, by using the term that is dependent to $a_{0}$ in Eq. \eqref{Force} one may find the force $F^{a_{0}}$  on a point particle $ m $ inside the sphere located at a distance $ r $ from the center of the sphere:
\begin{equation}       
F^{a_{0}} = \frac{2c_{1}m a_{0}}{M} \int \rho(x) \frac{(r-x\cos \theta)x^{2}dx d\phi d\cos \theta}{\sqrt{r^{2}+x^{2}-2rx\cos \theta}} 
\end{equation}
Here we have used the spherical symmetry of the system and neglected non-radial components of the force because of the symmetry. Starting with  integration on $ \theta $, while the integral on $ \phi $ is trivial, one can derive the force $F^{a_{0}}$ exerted on a particle inside sphere: 
\begin{eqnarray}  \label{Force1}    \nonumber
F^{a_{0}}=& \frac{4 \pi c_{1}m a_{0}}{M} \int^{R}_{0} \rho(x) xdx [r+x -  \left| r-x\right|    \\   \nonumber
-&  \frac{r^3+x^3-\left(r^2+r x+x^2\right) \left| r-x\right|  }{3 r^2} ] 
\end{eqnarray}

$ \boldsymbol{{\it Homogeneous ~Sphere:}} $ For a homogeneous sphere, i.e.  $ \rho = \rho_{0} $, by careful evaluation of the last integral  we find $ F^{a_{0}}_{in}= 2c_{1}ma_{0}(r/R - 1/5(r/R)^{3}) $ for a particle inside the sphere  and $ F^{a_{0}}_{out} = 2c_{1}ma_{0} (1-1/5(R/r)^{2}) $ for a particle outside of the sphere. Again, the force vanishes at $ r=0 $ , approaches to $2c_{1}ma_{0} $ at large radii and coincides at $ r=R $.   Now it is easy to find the virial energy of this sphere 
\begin{equation}     \label{Virial4}
Vir_{a_{0}} = \int_{0}^{R} \rho x\frac{F^{a_{0}}_{in}}{m} d^{3}x 
\end{equation}
which we obtain $ Vir_{a_{0}} = -36/35Mc_{1}a_{0} R $. For the case of Newtonian potential energy one obtains $ Vir_{N} = - \frac{3GM^{2}}{5R} $

$ \boldsymbol{{\it Exponential ~Sphere}:}$ Similarly, for an exponential isotropic spherical mass distribution,   one could find the $a_{0}$-dependent force from \eqref{Force1}.  Setting $ \rho = \rho_{0}\exp(-r/R_{d}) $ and $ R \rightarrow \infty  $ into equation \eqref{Force1} it is possible to find the following expression for the force:
\begin{eqnarray}
F^{a_{0}} = 2c_{1}ma_{0} ( 1-\frac{4}{u^{2}} \\   \nonumber
+& \exp(-u)\left[ -\frac{1}{2} -\frac{u^{2}}{6} +4\frac{u+1}{u^{2}} \right] ) 
\end{eqnarray}
which we have used $ u =r/R_{d} $ for more simplicity. Here $R_{d}$is the sphere scale length. To evaluate this equation one needs to use following integral:
\begin{eqnarray}
\int x^{m}e^{cx}dx= (\frac{\partial}{\partial c})^{m}\frac{e^{cx}}{c} 
=\frac{e^{cx}}{c}\Sigma^{m}_{k=0} (-1)^{k}\frac{D^{(k)}x^{m}}{c^{k}}
\end{eqnarray}  
where $ D^{(k)}$ is the $ k^{th}$ derivative operator. Please notify that this force vanishes at $ r=0 $. To prove this one needs to use $ -1/2 = \lim_{u \rightarrow 0} \frac{(u+1)\exp(-u)-1}{u^{2}}$. In addition, it is easy to check that at large radii one has $ F^{a_{0}}=2c_{1}ma_{0}$.  It is also possible to find this force in terms of modified Bessel functions. See \cite{mannheim5} for more details. Although here we prefer to work with exponential function to manifestly explain the finite behavior of the force at very small and large radii. As the last step we calculate the virial energy by putting this force into equation \eqref{Virial4}. Numerical evaluation of this integral is easy and we find $ Vir_{a_{0}}=-(7/2)MR_{d}c_{1}a_{0} $. By evaluating Newtonian virial energy for this mass distribution one could find $ Vir_{N}=-\frac{5GM^{2}}{32R_{d}} $.
Thus, the total virial energy of an exponential sphere becomes:
\begin{equation}
Vir = -\frac{5GM^{2}}{32R_{d}} - \frac{7}{2} c_{1}a_{0} M R_{d}
\end{equation}

$ \boldsymbol{{\it Exponential~ Disk}:}$ Now consider  an exponential disk with an exponential surface mass density:
\begin{equation} 
\Sigma(r) = \Sigma_{0} \exp(-r/R_{d})
\end{equation}
in which $R_{d}$ is the disk scale length. We already know that the Newtonian  potential energy is $ Vir_{N} = -11.63G \Sigma^{2}_{0}R^{3}_{d} $. See \cite{Binney}. To derive the negative virial energy due to the constant acceleration, one should first derive the relevant force. For a mass distribution in cylindrical coordinate we use Eq. \eqref{Potential} to find the radial force which results in:
\begin{eqnarray}   \nonumber
F^{a_{0}}(\varphi , r,z) =&  
\frac{2c_{1}ma_{0}}{M} \int^{2\pi}_{0} d\phi  \int^{\infty}_{0} dZ \int^{\infty}_{0} RdR \rho(R,Z)   \\   
\times & \frac{r-R\cos(\varphi- \phi)}{ \left(  r^{2}+R^{2}-2rR \cos(\varphi -\phi) +(Z-z)^{2} \right) ^{1/2}}  
\end{eqnarray}
If the mass distribution is cylindrically symmetric, like the situation that we have here, then the $\varphi $-component of force vanishes or equivalently we can consider $ \varphi =0 $. Putting Bessel function expansion of Green function in cylindrical coordinate:
\begin{eqnarray}   
\frac{1}{(r^{2}+R^{2}-2rRcos(\varphi -\phi) +(Z-z)^{2})^{1/2}}=~~~~~~~~~~~  \\
\Sigma_{m=-\infty}^{+\infty} \int_{0}^{\infty} dkJ_{m}(kr)J_{m}(kR)\exp[im(\phi-\varphi)-k\mid z-Z\mid]   \nonumber
\end{eqnarray}
into the prior equation and assuming that the mass distribution is  very thin , i.e. $ \rho(R,Z)= \Sigma(R) \delta(Z)$ we find that
\begin{eqnarray}
F^{a_{0}}(r) =   
\frac{4c_{1}\pi ma_{0}}{M} \int^{\infty}_{0} dk  \int^{\infty}_{0} RdR\Sigma(R) \\
\left(  J_{0}(kr)J_{0}(kR) -RJ_{1}(kr)J_{1}(kR) \right)         \nonumber
\end{eqnarray}
where all m-terms have vanished due to  orthogonality conditions for periodic functions except $ m = 0,1$.  Now we should use the following Bessel function integration:
\begin{equation}
\int_{0}^{\infty}RdRJ_{0}(kR)\exp(-\alpha R)= \frac{\alpha}{(\alpha^{2}+k^{2})^{3/2}}
\end{equation}
\begin{equation}
\int_{0}^{\infty}R^{2}dRJ_{1}(kR)\exp(-\alpha R)= \frac{3k\alpha}{(\alpha^{2}+k^{2})^{5/2}}
\end{equation}
\begin{eqnarray}
\int_{0}^{\infty}dk\frac{J_{0}(kr)}{(\alpha^{2}+k^{2})^{3/2}}=&   ~~~~~~~~~~~~~~\\   \nonumber
 \frac{r}{2\alpha}\left( I_{0}(\frac{r\alpha}{2})K_{1}(\frac{r\alpha}{2})-I_{1}(\frac{r\alpha}{2})K_{0}(\frac{r\alpha}{2}) \right) 
\end{eqnarray}
to derive the final result for the $a_{0}$-dependent term:
\begin{eqnarray}    \label{Rcurve3}
 F^{a_{0}} = 4c_{1}ma_{0}y  I_{1}(y)K_{1}(y)  
\end{eqnarray} 
where $y=r/2R_{d} $. In the last step we also have used some relations for differentiating Bessel functions such as $ I^{\prime}_{0}(z)=I_{1}(z) $, $ I^{\prime}_{1}(z)=I_{0}(z)-\frac{I_{1}}{z} $, $ K^{\prime}_{0}(z)=-K_{1}(z) $, $ K^{\prime}_{1}(z)=-K_{0}(z)-\frac{K_{1}}{z} $. This is a relatively tedious but straightforward calculation.  It is also possible, as Mannheim has shown before \cite{mannheim5}, to start from the potential \eqref{Potential} and derive the same force. From the asymptotic expansion of modified Bessel functions:
\begin{eqnarray}  \label{Knu}
K_{\nu}(z) \sim ~~~~~~~~~~~~~~~~~~~~~~~~~~~~~~~~~~~~~~~~~~~~~ \\   \nonumber       
 \sqrt{\frac{\pi e^{-2z}}{2z}}[ 1+\frac{(4\nu^{2}-1^{2})}{1!8z} +\frac{(4\nu^{2}-1^{2})(4\nu^{2}-3^{2})}{2!(8z)^{2}} +\ldots ]
\end{eqnarray} 
\begin{eqnarray} \label{Inu}
I_{\nu}(z) \sim ~~~~~~~~~~~~~~~~~~~~~~~~~~~~~~~~~~~~~~~~~~~~~ \\   \nonumber       
\sqrt{\frac{e^{2z}}{2\pi z}}[( 1-\frac{(4\nu^{2}-1^{2})}{1!8z} +\frac{(4\nu^{2}-1^{2})(4\nu^{2}-3^{2})}{2!(8z)^{2}} +\ldots ]
\end{eqnarray}
it is easy to prove that $ F^{a_{0}}$ vanishes at $ r=0$ and approaches to $ ma_{0} $ in large radii.

 Now we put this force into the definition of virial energy \eqref{virial} to obtain:
\begin{equation}
Vir_{a_{0}} = 64 \pi c_{1}a_{0} \Sigma_{0} R^{3}_{d} \int^{+ \infty }_{0} \exp(-2y) y^{3} I_{1}(y)K_{1}(y)dy  
\end{equation} 
 It is easy to evaluate this last integral numerically. The answer is  0.092. Therefore, the virial energy of an exponential disk becomes:
\begin{equation}
Vir = -11.63/4 \pi^{2} \frac{GM^{2}}{R_{d}} - 2.94c_{1} a_{0} M R_{d}
\end{equation}   
The results for virial energy will be used in future.

\acknowledgments

I would like to thank the authors S. McGaugh, P. D. Mannheim, J. G. O'Brien W.J.G. de Blok,  V. Rubin, K de Naray, A. Bosma,  for providing their data freely. My special thank is to S. McGaugh who provided an easily available galactic database at https://www.astro.umd.edu/~ssm/data/.  Also, I thank Mohammad Moghadassi who helped me in  some graphs. I acknowledge  the anonymous reviewer whose comments helped to clarify the manuscript and especially improved comparisons with other theories. This research has made use of NASA's Astrophysics Data System.

\end{document}